\newtheorem{proposition}{Proposition}
\begin{document}
\title{Performance Analysis of Coherent and Noncoherent Modulation under I/Q Imbalance}

\author{Bassant Selim, \IEEEmembership{Student Member, IEEE},  Sami Muhaidat, \IEEEmembership{Senior Member, IEEE}, Paschalis C. Sofotasios, \IEEEmembership{Senior Member, IEEE}, Bayan S. Sharif, \IEEEmembership{Senior Member, IEEE}, Thanos Stouraitis, \IEEEmembership{Fellow, IEEE}, George K. Karagiannidis, \IEEEmembership{Fellow, IEEE} \\ and Naofal Al-Dhahir, \IEEEmembership{Fellow, IEEE} 

\thanks{B. Selim, B. Sharif and T. Stouraitis are with the Department of Electrical and Computer Engineering, Khalifa University of Science and Technology, PO Box 127788, Abu Dhabi, UAE (e-mail: \{bassant.selim; bayan.sharif; thanos.stouraitis\}@kustar.ac.ae).}

\thanks{S.~Muhaidat is with the Department of Electrical and Computer Engineering, Khalifa University of Science and Technology, PO Box 127788, Abu Dhabi, UAE  and with the Institute for Communication Systems, University of Surrey, GU2 7XH, Guildford, UK (email: $ \rm muhaidat@ieee.org$).}

\thanks{P. C. Sofotasios is with the Department of Electrical and Computer Engineering, Khalifa
University of Science and Technology, PO Box 127788, Abu Dhabi, UAE, and  with the Department of Electronics and Communications Engineering, Tampere University of Technology, 33101 Tampere, Finland (e-mail: $\rm p.sofotasios@ieee.org$).}%

\thanks{G. K. Karagiannidis is with the Department of Electrical and Computer Engineering, Aristotle University of Thessaloniki, 54124 Thessaloniki, Greece  (e-mail: geokarag@auth.gr).}

\thanks{N. Al-Dhahir is with the Department of Electrical Engineering, University of Texas at Dallas, TX 75080 Dallas, USA (e-mail: {aldhahir@utdallas.edu}).}
}

\maketitle	

\begin{abstract}
In-phase/quadrature-phase Imbalance (IQI) is considered a major performance-limiting impairment in direct-conversion transceivers. Its effects become even more pronounced at higher carrier frequencies such as the millimeter-wave frequency bands being considered for 5G systems. In this paper, we quantify the effects of IQI on the performance of different modulation schemes under multipath fading channels. This is realized by developing a general framework for the symbol error rate (SER) analysis of coherent phase shift keying, noncoherent differential phase shift keying and noncoherent frequency shift keying under IQI effects. In this context, the moment generating function of the signal-to-interference-plus-noise-ratio is first derived for both single-carrier and multi-carrier systems suffering from transmitter (TX) IQI only, receiver (RX) IQI only and joint TX/RX IQI. Capitalizing on this, we derive analytic expressions for the SER of the different modulation schemes. These expressions are corroborated by comparisons with corresponding results from computer simulations and they provide insights into the dependence of IQI on the system parameters. We demonstrate that the effects of IQI differ considerably depending on the considered system as some cases of single-carrier transmission appear robust to IQI, whereas multi-carrier systems experiencing IQI at the RX require compensation in order to achieve a reliable communication link.
\end{abstract}

\begin{IEEEkeywords}
Hardware Impairments, I/Q imbalance, coherent detection, non coherent detection, differential PSK, FSK, performance analysis, symbol error rate.
\end{IEEEkeywords}

\section{Introduction}\label{S:Intro}

The emergence of the Internet of Things (IoT) along with the ever-increasing demands of the mobile Internet, impose high spectral efficiency, low latency and massive connectivity requirements on fifth generation (5G) wireless networks and beyond. Accordingly, next-generation wireless communication systems are expected to support heterogeneous devices for various standards and services with particularly high throughput and low latency requirements. This applies to both large scale and small scale network set ups, which calls for flexible and software reconfigurable transceivers that are capable of supporting the desired quality of service demands. To this end, direct conversion transceivers have attracted considerable attention owing to their suitability for higher levels of integration and their reduced cost and power consumption since they require neither external intermediate frequency filters nor image rejection filters. However, in practical communication scenarios, direct-conversion transceiver architectures inevitably suffer from radio-frequency (RF) front-end related impairments, including in-phase/quadrature-phase imbalances (IQI), which limit the overall system performance. In this context, IQI, which refers to the amplitude and phase mismatch between the I and Q branches of a transceiver, leads to imperfect image rejection, which results in performance degradation of both conventional and emerging communication systems \cite{883502, bernard2001digital}. In ideal scenarios, the I and Q branches of a mixer have equal amplitude and a phase shift of $90\degree$, providing an infinite attenuation of the image band; however, in practice, direct-conversion transceivers are sensitive to certain analog front-end related impairments that introduce errors in the phase shift as well as mismatches between the amplitudes of the I and Q branches which corrupt the down-converted signal constellation, thereby increasing the corresponding error rate \cite{883502}. 

It is recalled that depending on the receiver's (RX) ability to exploit knowledge of the carrier's phase to detect the signals, the detection can be classified into coherent and noncoherent \cite{bernard2001digital}. In the former, exact knowledge of the carrier phase as well as the channel state information (CSI) is required at the receiver, which is a challenging task in certain practical applications. On the contrary, this information is not required in noncoherent detection, which ultimately reduces the corresponding receiver complexity at the expense of a decreased spectral efficiency or a performance penalty. Therefore, the associated complexity-performance tradeoff must be thoroughly quantified in order to optimize the overall system efficiency and performance.  


I/Q signal processing is widely utilized in today's communication transceivers which gives rise to the problem of matching the amplitudes and phases of the branches, resulting in an interference from the image signal. Motivated by this practical concern, several recent works have proposed to model, mitigate or even exploit IQI, see \cite{Hamila_1, 7282087, Hamila_3} and the references therein. Specifically, the authors in \cite {6732973} derive the signal-to-interference-plus-noise-ratio (SINR), taking into account the channel correlation between the subcarriers, in the context of orthogonal frequency division multiplexing (OFDM) systems. Assuming IQI at the receiver only, the SINR probability distribution function (PDF) of generalized frequency division multiplexing under Weibull fading channels was derived and the average symbol error rate (SER) of $M$-ary quadrature amplitude modulation ($M-$QAM) was formulated in \cite{7343345}. For Rayleigh fading channels, the ergodic capacity of OFDM systems with receiver IQI and single-carrier frequency-division-multiple-access (SC-FDMA) systems with joint transmitter (TX)/receiver IQI was investigated in \cite{4533292} and \cite{6736643}, respectively. Likewise, the bit error rate (BER) of differential quadrature phase shift keying (DQPSK) was recently derived in \cite{ICC} for single-carrier and multi-carrier systems in the presence of IQI. Moreover, the authors in \cite{5990406} derived the SER of OFDM with $M-$QAM constellation, over frequency selective channels with RX IQI, whereas the authors in \cite{boulogeorgos2015effects} quantified the effects of IQI on the outage probability of both single-carrier and multi-carrier systems over $N$*Nakagami-$m$ fading conditions. Likewise, the error rate of free-space optical systems using subcarrier intensity modulated QPSK over Gamma-Gamma fading channels with receiver IQI was investigated in \cite{7876107}, while Chen \textit{et al.} recently analyzed the impact of IQI on differential space time block coding (STBC)-based OFDM systems by deriving an error floor and approximations for the corresponding BER \cite{7842134,7782798}. Finally, IQI has also been studied in half-duplex (HD) and full duplex (FD) amplify and forward (AF) and decode and forward (DF) coopeartive systems \cite{Hamila_2, Hamila_4, Hamila_5, Michalis_1}, as well as two-way relay systems and multi-antenna systems \cite{Michalis_2, Michalis_3, Michalis_4}. 

\subsection {Motivation}

It is well known that coherent information detection requires full knowledge of the CSI at the receiver, which is typically a challenging task as sophisticated and often complex channel estimation algorithms are required. In this context, noncoherent detection has been proposed as an efficient technique particularly for low-power wireless systems such as wireless sensor networks and relay networks \cite{abouei2011green}. The main advantage of this scheme stems from the fact that it simplifies the detection since it eliminates the need for channel estimation and tracking, which reduces the cost and complexity of the receiver \cite{1273685,1638651}. However, this comes at a cost of higher error rate or lower spectral efficiency; as a result, selecting the most suitable modulation scheme depends on the considered application and both noncoherent and coherent detection are efficiently implemented in practical systems. Moreover, it is recalled that the detrimental effects of RF front-end impairments on the system performance are often neglected. This also concerns the effects of IQI on \textit{M-}ary phase shift keying (\textit{M-}PSK), \textit{M-}ary differential phase shift keying (\textit{M-}DPSK) and \textit{M-}ary frequency-shift keying (\textit{M}-FSK), which, to the best of the authors' knowledge, have not yet been addressed in the open technical literature. To this end, this article is devoted to the quantification and analysis of these effects in wireless communications over multipath fading channels.

\subsection {Contribution}

The main objective of this paper is to develop a general framework for the comprehensive analysis of coherent and noncoherent modulation schemes under different IQI scenarios. To this end, we consider both single-carrier and multi-carrier systems and we quantify the effects of TX IQI, RX IQI and joint TX/RX IQI for \textit{M}-PSK, \textit{M}-DPSK and \textit{M}-FSK constellations over Rayleigh fading channels. In more details, the main contributions of this work are summarized as follows:
\begin{itemize}
\item We derive novel analytic expressions for the SINR PDF and the cumulative distribution function (CDF) for single-carrier systems over Rayleigh fading channels with TX and/or RX IQI along with a novel generalized closed form expression for the corresponding SINR MGF.
\item We derive novel closed form expressions for the SINR PDF, CDF and MGF for the case of multi-carrier systems over Rayleigh fading channels with TX and/or RX IQI.
\item Using the derived MGFs, we derive the corresponding SER expressions for the cases of \textit{M}-PSK, \textit{M}-DPSK and \textit{M}-FSK constellations. 
\item We derive simple and fairly tight upper bounds for the SER of the different investigated modulation schemes with TX and/or RX IQI, which provide insights into the effect of each parameter on the system performance.
\end{itemize}

\subsection{Organization and Notations}

The remainder of the paper is organized as follows: Section \ref{S:SM} provides a brief overview of the considered modulation schemes. In Section \ref{S:MGF}, the SINR PDF, CDF and MGF is derived for single-carrier and multi-carrier systems with IQI, while Section \ref{sec:PA} presents the SER of \textit{M}-PSK, \textit{M}-DPSK and \textit{M}-FSK with IQI. Upper bounds on the SER of the considered scenarios are derived in Section \ref{S:Asymptotic} whereas the corresponding numerical results and discussions are provided in Section \ref{S:Sim}. Finally, closing remarks are given in Section \ref{S:Conclusions}.

\subsubsection*{Notations}
Unless otherwise stated, $\left(\cdot\right)^{*}$ denotes conjugation and $j=\sqrt{-1}$. The operators $\mathbb{E}\left[\cdot\right]$ and $\left|\cdot\right|$ denote statistical expectation and absolute value operations, respectively. Also, $f_X\left(x \right)$ and $F_X\left(x \right)$ denote the PDF and CDF of $X$, respectively while $\mathcal{M}_X\left(s\right)$ is the MGF associated with $X$. Finally, the subscripts $t/r$ denote the up/down-conversion process at the TX/RX, respectively.

\section{System model}\label{S:SM}

We assume that a signal, $s$, is transmitted over a flat fading wireless channel, $h$, which follows a Rayleigh distribution and is subject to additive white Gaussian noise, $n$. Assuming that the TX/RX are equipped with a single antenna, we first revisit briefly the signal model for the considered \textit{M}-ary PSK, DPSK and FSK modulation schemes. 

\subsection {Coherent Detection of \textit{M}-PSK Symbols}

Assuming \textit{M}-PSK modulation, it is recalled that 
\begin{equation} 
\label{eq:theta_info}
\theta_m= \frac{\left(2m-1\right)\pi}{M}, \qquad m=1,2,\dots,M
\end{equation}
Hence, the complex baseband signal at the transmitter in the $l^{\rm th}$ symbol interval is given by
\begin{equation}
s[l]=A_c e^{j\theta[l]}
\end{equation}

\noindent where $\theta[l]$ is the information phase in the $l^{\rm th}$ symbol. Assuming that the receiver has perfect knowledge of the CSI as well as carrier phase and frequency, the complex baseband signal at the receiver is represented as 
\begin{equation}
x[l]=A_c e^{j\theta[l] }+n[l]. 
\end{equation}

\subsection{Noncoherent Detection of \textit{M}-DPSK Symbols}

Assuming $M$-ary DPSK modulation, the information phase in (\ref{eq:theta_info}) is modulated on the carrier as the difference between two adjacent transmitted phases. Considering that the channel is slowly varying and remains constant over two consecutive symbols, the receiver takes the difference of two adjacent phases to reach a decision on the information phase without knowledge of the carrier phase and channel state \cite{simon2005digital}. In this context, the information phases $\Delta \theta[l]$ are first differentially encoded to a set of phases as follows 
\begin{equation}
\theta[l]=\left(\theta[l-1]+\Delta \theta[l]\right)  \,\textup{mod}\,2\pi
\end{equation}
\noindent where $\Delta \theta_m =(2m-1)\pi/M,\: m=1,...,M$ and $\Delta \theta[l]$ is the information phase in the $l^{\rm th}$ symbol interval. The modulated symbol $s[l]$ is then obtained by applying a phase offset to the previous symbol $s[l-1]$, namely 
\begin{equation}
s[l]=s[l-1]e^{j\theta[l]}
\end{equation}
\noindent where $s[1]=1$. Similarly, the decision variable is obtained from the phase difference between two consecutive received symbols as follows
\begin{equation}
\widehat{s}[l]=r^*[l-1]r[l].
\end{equation}

\subsection{Noncoherent Detection of \textit{M}-FSK Symbols}

Assuming \textit{M}-FSK modulation, the $M$ information frequencies are given by
\begin{equation}
f_m=\left(2 m-1-M \right ){\Delta f}, \qquad m=1,2,\dots,M
\end{equation}
and thus the $l^{\rm th}$ complex baseband symbol at the transmitter is given by 
\begin{equation}
s[l]=A_c e^{j 2\pi f[l]}. 
\end{equation}
The decision variable at the receiver is then obtained by multiplying the received signal by the set of complex sinusoids $e^{j 2 \pi f_m},$ $m=1,2,...,M$ and passing them through $M$ matched filters. For orthogonal signals, the frequency spacing is chosen as $\Delta f= N/T_s$, where $T_s$ is the symbol period and $N$ is an integer. 

\section {MGF of the Received SINR with IQI}
\label{S:MGF}
At the receiver RF front end, the received RF signal undergoes various processing stages including filtering, amplification, and analog I/Q demodulation (down-conversion) to baseband and sampling. Assuming an ideal RF front end, the baseband equivalent received signal is represented as
\begin{align}
r_{\text{id}} = h s + n
\label{Eq:r_ideal} 
\end{align}  
\noindent where $h$ denotes the channel coefficient and $n$ is the circularly symmetric complex additive white Gaussian noise (AWGN) signal. The instantaneous signal to noise ratio (SNR) per symbol at the receiver input is given by 
\begin{equation}
\gamma_{\text{id}} = \frac{E_{s}}{N_{0}} \left|h\right|^{2}
\label{Eq:gamma_ideal}
\end{equation}
\noindent where $E_{s}$ is the energy per transmitted symbol and $N_{0}$ denotes the single-sided AWGN power spectral density.

Likewise, in the case of multicarrier systems, the corresponding baseband equivalent received signal at the $k^{\rm th}$ carrier is represented as
\begin{equation}
r_{\text{id}}\left(k\right) = h\left(k\right) s\left(k\right) + n\left(k\right)
\end{equation}
\noindent where $s\left(k\right)$ is the transmitted signal at the $k^{\rm th}$ carrier, whereas $ h\left(k\right)$ and $n\left(k\right)$ denote the corresponding channel coefficient and the circular symmetric complex AWGN, respectively. Hence, the corresponding instantaneous SNR can be represented as 
\begin{equation}
\gamma_{\text{id}}\left(k\right) = \frac{E_{s}}{N_{o}} \left|h\left(k\right)\right|^{2}. 
\end{equation}

It is assumed that the RF carriers are up/down converted to the baseband by direct conversion architectures, while we assume frequency independent IQI caused by the gain and phase mismatches of the I and Q mixers. In this context, the time-domain baseband representation of the IQI impaired signal is given by \cite{B:Schenk-book} 
\begin{align}
g_{\text{IQI}}= \mu_{t/r} g_{\text{id}} + \nu_{t/r} g_{\text{id}}^{*}
\label{Eq:IQI_basic}
\end{align}
\noindent where $g_{\text{id}}$ is the baseband IQI-free signal and $g_{\text{id}}^{*}$ is due to IQI. In addition, the corresponding IQI coefficients $\mu_{t/r}$ and $\nu_{t/r}$ are given by
\begin{equation}
 \begin{Bmatrix}{\mu_t}\\ {\nu_t}\end{Bmatrix} =  \frac{1 \{ \pm \}  \epsilon_t e^{\{ \pm \} j \phi_t}}{2} 
\end{equation}
and
\begin{equation}
 \begin{Bmatrix} {\mu_r}\\ {\nu_r}\end{Bmatrix} =  \frac{1 \{ \pm \}  \epsilon_r e^{\{ \mp \} j \phi_r}}{2}
\end{equation}
\noindent where $\epsilon_{t/r}$ and $\phi_{t/r}$ denote the TX/RX amplitude and phase mismatch levels, respectively. 
It is noted that for ideal RF front-ends, $\phi_{t/r}=0\degree$ and $\epsilon_{t/r}=1$, which implies that $\mu_{t/r}=1$ and $\nu_{t/r}=0$. Moreover, the TX/RX image rejection ratio ({\rm IRR}) is given by 
\begin{equation}
{\rm IRR}_{t/r}=\frac{\left|\mu_{t/r}\right|^2}{\left|\nu_{t/r}\right|^2}.
\end{equation}

It is recalled that in single-carrier systems, IQI causes distortion to the signal from its own complex conjugate while in multi-carrier systems, IQI causes distortion to the transmitted signal at carrier $k$ from its image signal at carrier $-k$. In the following, assuming that both the transmitter and receiver are equipped with a single antenna, we revisit the signal model of both single-carrier and multi-carrier systems in the presence of IQI at the transmitter and/or receiver. Then, we derive novel analytic expressions for the SINR PDF, CDF and MGF in each scenario.

\subsection{Single-Carrier Systems}
Single-carrier modulation is receiving increasing attention due to its robustness towards RF impairments compared to multi-carrier modulation; see \cite{7416621} and the references therein. Hence, it is considered more suitable for low complexity and low power applications. In what follows, we derive unified closed form expressions for the SINR PDF, CDF and MGF of single-carrier systems in the presence of IQI.

\subsubsection{Signal Model}\label{SSS:SM_SC_sig}

\begin{itemize}
\item \textit{TX IQI and ideal RX: }
This case assumes that the RX RF front-end is ideal, while the TX experiences IQI. Based on this, the baseband equivalent transmitted signal is expressed as
\begin{align}
s_{\text{IQI}} = \mu_t s +\nu_t s^{*}
\label{Eq:s_SC_TX_IQI} 
\end{align}
\noindent whereas the baseband equivalent received signal is given by 
\begin{align}
h s_{\text{IQI}} + n = \mu_t h s + \nu_t h s^{*} + n.
\label{Eq:r_SC_TX_IQI} 
\end{align}
\noindent Hence, the instantaneous SINR per symbol at the input of the receiver is given by
\begin{equation}
\gamma_{\text{IQI}} =\frac{\left|\mu_t \right|^{2}}{\left|\nu_t\right|^{2}+\frac{1}{\gamma_{\text{id}}}}.
\label{Eq:gamma_SC_TX}
\end{equation}
%
%
\item \textit{RX IQI and ideal TX: }
This case assumes that the TX RF front-end is ideal, while the RX is subject to IQI. Hence, the baseband equivalent received signal is given by 
\begin{equation}
r_{\text{IQI}} = \mu_r h s + \nu_r h^{*} s^{*} + \mu_r n + \nu_r n^{*}. 
\label{Eq:r_SC_RX_IQI}
\end{equation}
\noindent Therefore, at the RX input, the instantaneous SINR per symbol is expressed as
\begin{equation}
\gamma_{\text{IQI}} =\frac{\left|\mu_r\right|^{2}}{\left|\nu_r\right|^{2}+\frac{\left|\mu_r\right|^{2}+\left|\nu_r\right|^{2}}{\gamma_{\text{id}}}}.
\label{Eq:gamma_SC_RX}
\end{equation}

\item \textit{Joint TX/RX IQI: }
This case assumes that both TX and RX are impaired by IQI and the baseband equivalent received signal is given by
\begin{equation}
r_{\text{IQI}} = \left(\xi_{11} h + \xi_{22} h^{*}\right)s+\left(\xi_{12} h + \xi_{21} h^{*}\right)s^{*}+\mu_r n + \nu_r n^{*}.
\label{Eq:r_SC_TX_RX_IQI}
\end{equation}
\noindent Based on this, the instantaneous SINR per symbol at the RX input is given by
\begin{equation}
\gamma_{\text{IQI}} =\frac{\left|\xi_{11}\right|^2 + \left|\xi_{22}\right|^2 }{ \left|\xi_{12}\right|^{2}+ \left|\xi_{21}\right|^{2} +\frac{\left|\mu_r\right|^{2}+\left|\nu_r\right|^{2}}{\gamma_{\text{id}}}}
\label{Eq:gamma_SC_TXRX}
\end{equation}
\noindent where $\xi_{11} = \mu_r \mu_t$, $\xi_{22} = \nu_r \nu_t^{*}$, $\xi_{12} =\mu_r \nu_t$, and $\xi_{21} =\nu_r \mu_t^{*}$.
\end {itemize}
\subsubsection {SINR Distribution}

From (\ref{Eq:gamma_SC_TX}), (\ref{Eq:gamma_SC_RX}) and (\ref{Eq:gamma_SC_TXRX}), the SINR of single-carrier systems in the presence of IQI can be expressed as 
\begin{equation}
\gamma_{\text{IQI}} =\frac{\alpha}{\beta+\frac{A}{\gamma_{\text{id}}}}
\end{equation}
\noindent where the parameters $\alpha$, $\beta$, and $A$ are given in Table \ref{tab:coeff}.
\begin{table}[H]
\protect\caption{Single-carrier systems impaired by IQI parameters}
\centering{}%
\label{tab:coeff}
\begin{tabular}{|c|c|c|c|}
\cline{2-4}
\multicolumn{1}{c|}{} & $\alpha$ &$\beta$ &  $A$\tabularnewline
\hline 
$\text{TX IQI}$ & $|\mu_t |^2 $&	$|\nu_t |^2 $& $1 $\tabularnewline
\hline 
$\text{RX IQI}$ & $|\mu_r|^2$ &$	|\nu_r |^2$	& $|\mu_r |^2+|\nu_r |^2$ \tabularnewline
\hline 
$\text{Joint TX/RX IQI}$ & $|\xi_{11} |^2+|\xi_{22} |^2$&	$|\xi_{12} |^2+|\xi_{21} |^2	$& $|\mu_r |^2+|\nu_r |^2$ \tabularnewline
\hline 
\end{tabular}
\end{table}
\noindent Hence, the CDF of $\gamma_{\text{IQI}}$ is obtained as
\begin{equation}
\label{eq:SC_cdf}
F_{\gamma_{IQI}}\left(x\right )=F_{\gamma_{id}}\left(\frac{A}{\frac{\alpha}{x} -\beta}\right ) 
\end{equation}
\noindent where $\gamma_{id}$ is the IQI free SNR, which follows an exponential distribution with CDF and PDF given by
\begin{equation}
\label{eq:CDF_ideal}
F_{\gamma_{id}}\left(x\right)=1-\exp\left(-\frac{x}{\overline{\gamma}}\right)
\end{equation}
\noindent and
\begin{equation}
\label{eq:PDF_ideal}
f_{\gamma_{id}}\left(x\right)=\frac{\exp\left(-\frac{x}{\overline{\gamma}}\right)}{\overline{\gamma}} 
\end{equation}
\noindent respectively, where $\overline \gamma=E_s/N_0$ is the average SNR. Hence, assuming TX and/or RX IQI, the corresponding SINR CDF is given by
\begin{equation}
F_{\gamma_{\rm IQI}}\left(x\right )= 1-e^{-\frac{A}{\overline{\gamma} \left(\frac{\alpha}{x} -\beta\right )}}, \qquad \hspace{3em} 0\leq x\leq \frac{\alpha}{\beta} 
\end{equation}
\noindent Given that $f_{\gamma_{\rm IQI}}\left(x\right )=\frac{{\rm d}}{{\rm d} x}F_{\gamma_{\rm IQI}}\left(x\right )$, the SINR PDF, in the presence of IQI, is given by
\begin{equation}
\label{eq:SC_pdf}
f_{\gamma_{\rm IQI}}\left(x \right)=\frac{\alpha Ae^{-\frac{A}{\overline{\gamma} \left(\frac{\alpha}{x} -\beta\right )}}}{\overline{\gamma}\left(\alpha-x \beta \right )^2}  
\end{equation} 
\noindent which is valid for $0\leq x \leq \frac{\alpha}{\beta}$.

\subsubsection{Moment Generating Function (MGF)}
The MGF is an important statistical metric and constitutes a convenient tool in digital communication systems over fading channels \cite{simon2005digital}. In what follows, we derive a generalized closed form expression for the SINR MGF of single-carrier systems in the presence of IQI, which will be particularly useful in the subsequent analysis. 
\begin{proposition}
For single-carrier systems impaired by IQI, the MGF of the instantaneous fading SINR is given by
\begin{equation}
\label{eq:SC_mgf}
\mathcal{M}_{\gamma_{\rm IQI}}\left(s\right)= e^{\frac{\alpha}{\beta}s+\frac{A}{\beta\overline{\gamma}}}\Gamma\left(1,\frac{A}{\overline{\gamma}\beta} ;\frac{s  \alpha A}{\beta^2\overline{\gamma}}\right )
\end{equation}
\noindent where $\Gamma\left(\alpha,x;b\right)=\int_{x}^{\infty}t^{\alpha-1}e^{-t-\frac{b}{t}} \mathrm{d}t$ is the extended upper incomplete Gamma function \cite{chaudhry2001class}.
\end{proposition}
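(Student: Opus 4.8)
The plan is to evaluate the MGF directly from its definition, $\mathcal{M}_{\gamma_{\rm IQI}}(s)=\mathbb{E}\!\left[e^{s\gamma_{\rm IQI}}\right]=\int_{0}^{\alpha/\beta}e^{sx}f_{\gamma_{\rm IQI}}(x)\,\mathrm{d}x$, using the closed-form SINR PDF in (\ref{eq:SC_pdf}) together with its support $0\le x\le \alpha/\beta$. The first preparatory step is to rewrite the exponent appearing in the PDF in the rational form $\frac{A}{\overline{\gamma}(\alpha/x-\beta)}=\frac{Ax}{\overline{\gamma}(\alpha-x\beta)}$, so that the integrand becomes $\frac{\alpha A}{\overline{\gamma}(\alpha-x\beta)^2}\exp\!\left(sx-\frac{Ax}{\overline{\gamma}(\alpha-x\beta)}\right)$.

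Next I would introduce the substitution $u=\alpha/(\alpha-x\beta)$, which maps the interval $[0,\alpha/\beta)$ bijectively and monotonically onto $[1,\infty)$, with $x=\frac{\alpha}{\beta}\cdot\frac{u-1}{u}$ and $\mathrm{d}x=\frac{\alpha}{\beta u^{2}}\,\mathrm{d}u$. Under this change of variable one checks that $(\alpha-x\beta)^{2}=\alpha^{2}/u^{2}$, that $\frac{Ax}{\overline{\gamma}(\alpha-x\beta)}=\frac{A}{\overline{\gamma}\beta}(u-1)$, and that $sx=\frac{s\alpha}{\beta}\left(1-\frac{1}{u}\right)$; inserting these and cancelling the $u^{2}$ factors collapses the MGF to $\frac{A}{\overline{\gamma}\beta}\,e^{\frac{s\alpha}{\beta}+\frac{A}{\overline{\gamma}\beta}}\int_{1}^{\infty}\exp\!\left(-\frac{A}{\overline{\gamma}\beta}u-\frac{s\alpha}{\beta u}\right)\mathrm{d}u$. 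A second, purely linear substitution $t=\frac{A}{\overline{\gamma}\beta}u$ then rescales the remaining integral onto exactly the kernel $e^{-t-b/t}$ of the extended upper incomplete Gamma function, with lower limit $t=\frac{A}{\overline{\gamma}\beta}$ and parameter $b=\frac{s\alpha A}{\beta^{2}\overline{\gamma}}$; the Jacobian $\mathrm{d}t$ produces a factor $\frac{\overline{\gamma}\beta}{A}$ that cancels the leading $\frac{A}{\overline{\gamma}\beta}$, and collecting the surviving exponential prefactor yields (\ref{eq:SC_mgf}).

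I expect the computation to be essentially mechanical once the right variable is chosen, so the only genuinely delicate point is the selection of the substitution $u=\alpha/(\alpha-x\beta)$: it must simultaneously rationalize the $(\alpha-x\beta)^{-2}$ weight, render the PDF exponent affine in $u$, and send the finite endpoint $x=\alpha/\beta$ to $u=\infty$, so that the result can be written as an integral over a half-line matching the definition of $\Gamma(\alpha,x;b)$. It is worth noting that $\mathcal{M}_{\gamma_{\rm IQI}}(s)$ converges for every $s$, since $\gamma_{\rm IQI}$ has bounded support (equivalently, the transformed integrand decays like $e^{-\frac{A}{\overline{\gamma}\beta}u}$ as $u\to\infty$), and, as a consistency check, setting $s=0$ gives $\mathcal{M}_{\gamma_{\rm IQI}}(0)=e^{A/(\overline{\gamma}\beta)}\,\Gamma\!\left(1,\frac{A}{\overline{\gamma}\beta};0\right)=e^{A/(\overline{\gamma}\beta)}e^{-A/(\overline{\gamma}\beta)}=1$, as it should.
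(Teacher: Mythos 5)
Your proof is correct and follows essentially the same route as the paper: direct evaluation of $\int_{0}^{\alpha/\beta}e^{sx}f_{\gamma_{\rm IQI}}(x)\,\mathrm{d}x$, where your composite substitution $u=\alpha/(\alpha-x\beta)$ followed by $t=\frac{A}{\overline{\gamma}\beta}u$ is exactly the composition of the paper's two changes of variable $y=\alpha-x\beta$ and $z=\frac{\alpha A}{\beta\overline{\gamma}y}$. The algebra, limits, and the $s=0$ sanity check are all sound.
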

\begin{proof}
By recalling that \cite{simon2005digital}
\begin{equation}
\label{eq:def_mgf}
\mathcal{M}_{\gamma_{\rm IQI}}\left(s\right)=\int_{0}^{\infty}e^{s x}f_{\gamma_{\rm IQI}}\left(x \right ) {\rm d}x
\end{equation}
\noindent and substituting (\ref{eq:SC_pdf}) into (\ref{eq:def_mgf}) yields
\begin{equation}
\label{eq:mgf_TXIQI1}
\mathcal{M}_{\gamma_{\rm IQI}}\left(s\right)=\int_{0}^{\frac{\alpha}{\beta}}e^{sx}\frac{\alpha Ae^{-\frac{A}{\overline{\gamma} \left(\frac{\alpha}{x} -\beta\right )}}}{\overline{\gamma}\left(\alpha-x\beta \right )^2} {\rm d} x.  
\end{equation}
\noindent By also considering the change of variable $y={\alpha} -{\gamma}\beta$ and after some mathematical manipulations, one obtains 
\begin{equation}
\label{eq:mgf_TXIQI3}
\mathcal{M}_{\gamma_{\rm IQI}}\left(s\right)=\frac{\alpha A}{\overline{\gamma}\beta} e^{\frac{\alpha}{\beta}s+\frac{A}{\beta\overline{\gamma}}}\int_{0}^{\alpha}\, e^{-\frac{s y}{\beta}-\frac{\alpha A}{\beta\overline{\gamma}y}} {\rm d}y. 
\end{equation}
\noindent Based on this and by taking $z=\frac{\alpha A}{\beta\overline{\gamma}y}$, equation (\ref{eq:SC_mgf}) is deduced, which completes the proof. 
\end{proof}

\subsection{Multi-carrier systems }\label{S:MC}

It is recalled that multi-carrier systems divide the signal bandwidth among $K$ carriers, which provides several advantages including enhanced robustness against multipath fading. Based on this, Long-Term Evolution (LTE) employs orthogonal frequency division multiplexing (OFDM) in the downlink. In this subsection, we derive the SINR PDF, CDF and MGF of multi-carrier systems in the presence of IQI, which creates detrimental performance effects. To this end, we assume that the RF carriers are down converted to the baseband by wideband direct conversion. We also denote the set of signals as $S=\{-\frac{K}{2},\dots,-1,1,\dots,\frac{K}{2}\}$ and assume that there is a data signal present at the image subcarrier and that the channel responses at the $k^{\rm th}$ carrier and its image are uncorrelated. 

\subsubsection{Joint TX/RX impaired by IQI}\label{SSS:SM_MC_TX_RX_IQI}

Here, we consider the general scenario where both the TX and RX suffer from IQI. The baseband equivalent received signal in this case is given by
\begin{equation}
r_{\text{IQI}} = \left(\xi_{11} h\left(k \right ) + \xi_{22} h^{*}\left(-k \right )\right)s\left(k \right )+ \left(\xi_{12} h\left(k \right ) + \xi_{21} h^{*}\left(-k \right )\right)s^{*}\left(-k \right )+\mu_r n\left(k \right ) + \nu_r n^{*}\left(-k \right )
\label{Eq:r_MC_TX_RX_IQI}
\end{equation}
\noindent where the carrier $-k$ is the image of the carrier $k$. To this effect, the instantaneous SINR per symbol at the input of the RX is given by 
\begin{equation}
\gamma =\frac{\left|\xi_{11}\right|^2 + \left|\xi_{22}\right|^2 \frac{{\gamma_{\text{id}}}\left(-k \right )}{{\gamma_{\text{id}}}\left(k \right )}}{ \left|\xi_{12}\right|^{2}+ \left|\xi_{21}\right|^{2} \frac{\gamma_{\text{id}}\left(-k \right )}{{\gamma_{\text{id}}}\left(k \right )}+\frac{\left|\mu_r\right|^{2}+\left|\nu_r\right|^{2}}{{\gamma_{\text{id}}}\left(k \right )}}
\label{eq:gamma_MC_TXRX}
\end{equation}
\noindent where
\begin{equation}
\gamma_{\text{id}}\left(-k \right ) = \frac{E_{s}}{N_{0}} \left|h\left(-k \right )\right|^{2}. 
\end{equation}
Therefore, for the case of given $\gamma_{id}\left(-k\right)$ and with the aid of (\ref{eq:gamma_MC_TXRX}) and (\ref{eq:CDF_ideal}), the conditional SINR CDF can be expressed as
\begin{equation}
\label{eq:MC_JOINT_CDF1}
F_{\gamma_{\rm IQI}}\left(x |\gamma_{id}\left(-k \right )\right)=1-\exp\left(\frac{-x\left(\left|\xi_{21}\right|^{2}\gamma_{id}\left(-k \right )+\left|\mu_r\right|^{2}+\left|\nu_r\right|^{2} \right )-\left|\xi_{22}\right|^{2}\gamma_{id}\left(-k \right )}{\overline{\gamma}\left( \left|\xi_{11}\right|^{2}-x \left|\xi_{12}\right|^{2} \right )}\right ).
\end{equation}
\noindent Based on this, the unconditional CDF is obtained by integrating (\ref{eq:MC_JOINT_CDF1}) over (\ref{eq:PDF_ideal}), yielding
\begin{equation}
\label{eq:MC_JOINT_CDF}
F_{\gamma_{\rm IQI}}\left(x \right )=1-\frac{\exp\left(-\frac{x \left(\left|\mu_r\right|^{2}+\left|\nu_r\right|^{2}\right)}{\overline{\gamma}\left( \left|\xi_{11}\right|^{2}-x \left|\xi_{12}\right|^{2}\right )} \right )}{1+\frac{x\left|\xi_{21}\right|^{2}-\left|\xi_{22}\right|^{2}}{\left( \left|\xi_{11}\right|^{2}-x\left|\xi_{12}\right|^{2}\right )}},\qquad 0\leq x\leq \frac{\left|\xi_{11}\right|^2}{\left|\xi_{12} \right|^2}
\end{equation}
\noindent whereas the SINR PDF is obtained as
\begin{equation}
\label{eq:MC_JOINT_PDF}
f_{\gamma_{\rm IQI}}\left(x \right )=\frac{\exp\left(-\frac{x\left(|\mu_r|^{2}+|\nu_r|^{2}\right)}{\overline{\gamma}\left( |\xi_{11}|^{2}-x|\xi_{12}|^{2}\right )} \right )\left( \frac{|\xi_{11}|^{2}\left(|\mu_{R}|^{2} +\left|\nu_{R}\right|^{2}\right )}{\overline{\gamma}}+\frac{|\xi_{21}|^{2}|\xi_{11}|^{2}-|\xi_{12}|^{2}|\xi_{22}|^{2}}{1+\frac{x|\xi_{21}|^{2}-\left|\xi_{22}\right|^{2}}{ |\xi_{11}|^{2}-x|\xi_{12}|^{2}}}\right )}{\left( |\xi_{11}|^{2}-x|\xi_{12}|^{2}\right )\left(|\xi_{11}|^{2}-|\xi_{22}|^{2} +x\left(|\xi_{21}|^{2} -|\xi_{12}|^{2}\right )\right )},
\end{equation}
\noindent which is valid for $0\leq x \leq  |\xi_{11}|^2/|\xi_{12}|^2$.

\begin{proposition}
The MGF of multi-carrier systems impaired by joint TX/RX IQI is given by 
\begin{equation}
\label{eq:MGF_MC_joint_eq}
\mathcal{M_{\gamma_{\rm IQI}}}\left(s\right)=C+\frac{|\xi_{12}|^2}{s\left(|\xi_{11}|^2-|\xi_{22}|^2 \right )}e^{s\frac{|\xi_{11}|^2}{|\xi_{12}|^2}+\frac{|\mu_{r}|^{2}+|\nu_{r}|^{2}}{|\xi_{12}|^2 \overline{\gamma}}}\gamma\left(2,s\frac{|\xi_{11}|^2}{|\xi_{12}|^2};s\frac{|\xi_{11}|^2\left( |\mu_{r}|^{2}+|\nu_{r}|^{2}\right )}{|\xi_{12}|^4\overline{\gamma}} \right)
\end{equation}
\noindent for $|\xi_{12}|^2=|\xi_{21}|^2$,
\begin{equation}
\begin{split}
\label{eq:MGF_MC_joint_lt}
\mathcal{M}_{\gamma_{\rm IQI}}\left(s\right)=\frac{|\xi_{11}|^2}{|\xi_{11}|^2-|\xi_{22}|^2}+ & \sum_{k=0}^{\infty }\frac{\left(-1\right)^ks^k \, d^k e^{\frac{|\mu_{R}|^{2}+|\nu_{R}|^{2}}{|\xi_{12}|^2 \overline{\gamma}}+s \frac{|\xi_{11}|^2}{|\xi_{12}|^2}}}{\left( |\xi_{12}|^2-|\xi_{21}|^2\right )^{k+1}|\xi_{12}|^{2k-2}}\\ & \times \gamma\left(1-k,s\frac{|\xi_{11}|^2}{|\xi_{12}|^2};s\frac{|\xi_{11}|^2\left( |\mu_{r}|^{2}+|\nu_{r}|^{2}\right )}{|\xi_{12}|^4\overline{\gamma}} \right)
\end{split}
\end{equation}
\noindent for $\left|\frac{{|\xi_{11}|^2|\xi_{21}|^2-|\xi_{22}|^2|\xi_{12}|^2}}{|\xi_{12}|^2-|\xi_{21}|^2}\right|< |\xi_{11}|^2$, and
\begin{equation}
\begin{split}
\label{eq:MGF_MC_joint_gt}
\mathcal{M_{\gamma_{\rm IQI}}}\left(s\right)=C+e^{s\frac{|\xi_{11}|^2}{|\xi_{12}|^2}+\frac{|\mu_{r}|^{2}+|\nu_{r}|^{2}}{|\xi_{12}|^2 \overline{\gamma}}} & \sum_{k=0}^{\infty } \frac{\left(-1 \right )^k\left(|\xi_{12}|^2 -|\xi_{21}|^2\right)^{k }  |\xi_{12}|^{2k+4}}{d^{k+1}s^{k+1}}\\ & \times\gamma\left(k+2,s\frac{|\xi_{11}|^2}{|\xi_{12}|^2};s\frac{|\xi_{11}|^2\left( |\mu_{r}|^{2}+|\nu_{r}|^{2}\right )}{|\xi_{12}|^4\overline{\gamma}}\right )
\end{split}
\end{equation}
\noindent for $\left|\frac{{|\xi_{11}|^2|\xi_{21}|^2-|\xi_{22}|^2|\xi_{12}|^2}}{|\xi_{12}|^2-|\xi_{21}|^2}\right|>  |\xi_{11}|^2$, where $\gamma\left(\alpha,x;b\right)=\int_{0}^{x}t^{\alpha-1}e^{-t-\frac{b}{t}} \mathrm{d}t$ is the extended lower incomplete Gamma function \cite{chaudhry2001class}, while 
\begin{equation}
\label{eq:const}
C=\frac{|\xi_{11}|^2}{|\xi_{11}|^2-|\xi_{22}|^2}
\end{equation}
\noindent and
\begin{equation}
\label{eq:d}
d={|\xi_{11}|^2|\xi_{21}|^2-|\xi_{22}|^2|\xi_{12}|^2}.
\end{equation}
\end{proposition}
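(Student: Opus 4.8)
The plan is to compute the MGF from its definition, $\mathcal{M}_{\gamma_{\rm IQI}}(s)=\int_{0}^{\infty}e^{sx}f_{\gamma_{\rm IQI}}(x)\,\mathrm{d}x$, but to work from the CDF in (\ref{eq:MC_JOINT_CDF}) rather than from the heavier PDF. Since the SINR is supported on $[0,|\xi_{11}|^{2}/|\xi_{12}|^{2}]$ and $1-F_{\gamma_{\rm IQI}}(x)$ factors cleanly as the product of $\exp(-x(|\mu_{r}|^{2}+|\nu_{r}|^{2})/(\overline{\gamma}(|\xi_{11}|^{2}-x|\xi_{12}|^{2})))$ with $(|\xi_{11}|^{2}-x|\xi_{12}|^{2})/(|\xi_{11}|^{2}-|\xi_{22}|^{2}+x(|\xi_{21}|^{2}-|\xi_{12}|^{2}))$, integration by parts gives $\mathcal{M}_{\gamma_{\rm IQI}}(s)=C+s\int_{0}^{|\xi_{11}|^{2}/|\xi_{12}|^{2}}e^{sx}(1-F_{\gamma_{\rm IQI}}(x))\,\mathrm{d}x$: the boundary term at the upper endpoint vanishes because $1-F_{\gamma_{\rm IQI}}$ is continuous and equal to zero there, and the constant $C$ in (\ref{eq:const}) is exactly the value $1-F_{\gamma_{\rm IQI}}(0)$. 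This reduces everything to a single definite integral $J$.

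I would then apply the change of variable $y=|\xi_{11}|^{2}-x|\xi_{12}|^{2}$, in the spirit of the substitution used in the proof of Proposition~1. It maps the range onto $[0,|\xi_{11}|^{2}]$, turns the exponent into $(|\xi_{11}|^{2}-y)(|\mu_{r}|^{2}+|\nu_{r}|^{2})/(|\xi_{12}|^{2}\overline{\gamma}y)$, and, via the identity $|\xi_{11}|^{2}-|\xi_{22}|^{2}+x(|\xi_{21}|^{2}-|\xi_{12}|^{2})=(d+y(|\xi_{12}|^{2}-|\xi_{21}|^{2}))/|\xi_{12}|^{2}$ with $d$ as in (\ref{eq:d}), replaces the rational prefactor by $|\xi_{12}|^{2}y/(d+y(|\xi_{12}|^{2}-|\xi_{21}|^{2}))$. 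After absorbing the Jacobian $\mathrm{d}x=-\mathrm{d}y/|\xi_{12}|^{2}$ and pulling the $y$-independent factor $\exp(s|\xi_{11}|^{2}/|\xi_{12}|^{2}+(|\mu_{r}|^{2}+|\nu_{r}|^{2})/(|\xi_{12}|^{2}\overline{\gamma}))$ out of the integral, what remains is $\int_{0}^{|\xi_{11}|^{2}}\frac{y\,e^{-sy/|\xi_{12}|^{2}-b/y}}{d+y(|\xi_{12}|^{2}-|\xi_{21}|^{2})}\,\mathrm{d}y$ with $b=|\xi_{11}|^{2}(|\mu_{r}|^{2}+|\nu_{r}|^{2})/(|\xi_{12}|^{2}\overline{\gamma})$. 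The three cases of the proposition correspond exactly to the three ways of disposing of the single rational factor $y/(d+y(|\xi_{12}|^{2}-|\xi_{21}|^{2}))$.

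When $|\xi_{12}|^{2}=|\xi_{21}|^{2}$ that factor equals $y/d$ with $d=|\xi_{12}|^{2}(|\xi_{11}|^{2}-|\xi_{22}|^{2})$, so the further substitution $t=sy/|\xi_{12}|^{2}$ turns the integral directly into a multiple of the extended lower incomplete Gamma function $\gamma(2,s|\xi_{11}|^{2}/|\xi_{12}|^{2};s|\xi_{11}|^{2}(|\mu_{r}|^{2}+|\nu_{r}|^{2})/(|\xi_{12}|^{4}\overline{\gamma}))$, which together with $\mathcal{M}_{\gamma_{\rm IQI}}(s)=C+sJ$ yields (\ref{eq:MGF_MC_joint_eq}). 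When $|d/(|\xi_{12}|^{2}-|\xi_{21}|^{2})|>|\xi_{11}|^{2}$ I would expand $y/(d+y(|\xi_{12}|^{2}-|\xi_{21}|^{2}))=\sum_{k=0}^{\infty}(-1)^{k}(|\xi_{12}|^{2}-|\xi_{21}|^{2})^{k}y^{k+1}/d^{k+1}$, a geometric series that under this hypothesis converges uniformly on $[0,|\xi_{11}|^{2}]$, integrate term by term, identify each resulting integral after $t=sy/|\xi_{12}|^{2}$ with a $\gamma(k+2,\cdot\,;\cdot)$, and collect powers to obtain (\ref{eq:MGF_MC_joint_gt}). When $|d/(|\xi_{12}|^{2}-|\xi_{21}|^{2})|<|\xi_{11}|^{2}$ I would instead expand in inverse powers, $y/(d+y(|\xi_{12}|^{2}-|\xi_{21}|^{2}))=\sum_{k=0}^{\infty}(-1)^{k}d^{k}/((|\xi_{12}|^{2}-|\xi_{21}|^{2})^{k+1}y^{k})$, integrate term by term, and recognise each $\int_{0}^{|\xi_{11}|^{2}}y^{-k}e^{-sy/|\xi_{12}|^{2}-b/y}\,\mathrm{d}y$ as a $\gamma(1-k,\cdot\,;\cdot)$ --- still finite for every $k$ thanks to the $e^{-b/y}$ factor --- which gives (\ref{eq:MGF_MC_joint_lt}). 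In all three cases the final tidy-up uses $d=|\xi_{11}|^{2}|\xi_{21}|^{2}-|\xi_{22}|^{2}|\xi_{12}|^{2}$ and $C=|\xi_{11}|^{2}/(|\xi_{11}|^{2}-|\xi_{22}|^{2})$ to match the stated coefficients and powers of $|\xi_{12}|^{2}$, $s$ and $d$.

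The bookkeeping --- propagating powers of $|\xi_{12}|^{2}$, $s$ and $d$ through the two substitutions and re-indexing the series --- is tedious but routine. The genuine obstacle is justifying the term-by-term integration in the third case: the inverse-power expansion of $y/(d+y(|\xi_{12}|^{2}-|\xi_{21}|^{2}))$ converges only for $y>|d/(|\xi_{12}|^{2}-|\xi_{21}|^{2})|$, which is a proper subinterval of $[0,|\xi_{11}|^{2}]$ precisely when the stated hypothesis holds, so uniform convergence is unavailable and the resulting series should be read as an asymptotic expansion (the super-exponential decay of $e^{-b/y}$ as $y\to0$ is what keeps the individual $\gamma(1-k,\cdot\,;\cdot)$ finite and controls the tail). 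By contrast the second case rests on uniform convergence over the whole interval and is fully rigorous, while the first is exact; so the write-up should flag the third as the delicate point and, if a completely rigorous version is wanted, either restrict it to the subregime where the geometric form converges uniformly or bound the tail explicitly using the $e^{-b/y}$ factor.
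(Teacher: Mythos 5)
Your proposal is correct and follows essentially the same route as the paper's Appendix A: integration by parts giving $\mathcal{M}_{\gamma_{\rm IQI}}(s)=C+s\int_{0}^{|\xi_{11}|^{2}/|\xi_{12}|^{2}}e^{sx}\left(1-F_{\gamma_{\rm IQI}}(x)\right)\mathrm{d}x$, the substitution $y=|\xi_{11}|^{2}-x|\xi_{12}|^{2}$, and then the same three case-wise treatments (direct evaluation when $|\xi_{12}|^{2}=|\xi_{21}|^{2}$, expansion in positive powers of $y$ for the ``$>$'' regime, inverse powers for the ``$<$'' regime), each reduced to extended lower incomplete Gamma functions via $t=sy/|\xi_{12}|^{2}$. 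Your caveat about the ``$<$'' case is apt: the paper also expands in inverse powers there and integrates term by term without addressing that the series converges only on the subinterval $y>|d|/\bigl||\xi_{12}|^{2}-|\xi_{21}|^{2}\bigr|$, so that expression is likewise only formally (asymptotically) justified in the paper itself.
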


\begin{proof}
The proof is provided in Appendix \ref{Appendix:MGF_MC_JOINT}. 
\end{proof}

\subsubsection{TX Impaired by IQI}\label{SSS:SM_MC_TX_IQI}
Assuming that the RX RF front-end is ideal, while the TX experiences IQI, the baseband equivalent received signal is
\begin{align}
s_{\text{IQI}} = \mu_t s\left(k\right)h\left(k\right) + \nu_t s^{*}\left(-k\right)h\left(k\right)+n\left(k\right)
\label{Eq:s_MC_TX_IQI} 
\end{align}
\noindent and the instantaneous SINR per symbol at the input of the RX is given by 
\begin{equation}
\begin{split}
\gamma_{\text{IQI}} =\frac{\left|\mu_t \right|^{2}}{\left|\nu_t\right|^{2}+\frac{1}{\gamma_{\text{id}}\left(k\right)}}.
\label{Eq:gamma_MC_TX}
\end{split}
\end{equation}
\noindent Hence, by setting $\mu_r=1$ and $\nu_r =0$ in (\ref{eq:MC_JOINT_CDF}), it follows that 
\begin{equation}
F_{\gamma_{\rm IQI}}\left(x\right )= 1-e^{-\frac{1}{\overline{\gamma} \left(\frac{|\mu_t |^2}{x} -|\nu_t |^2\right )}}, \qquad  0\leq x \leq \frac{|\mu_t |^2}{|\nu_t |^2}
\end{equation}
\noindent which yields straightforwardly the corresponding SINR PDF, namely 
\begin{equation}
\label{eq:MC_pdf}
f_{\gamma_{\rm IQI}}\left(x\right)=\frac{|\mu_t |^2 e^{-\frac{1}{\overline{\gamma} \left(\frac{|\mu_t |^2}{x} -|\nu_t |^2\right )}}}{\overline{\gamma}\left(|\mu_t |^2-x|\nu_t |^2 \right )^2}  
\end{equation} 
\noindent which is valid for $0\leq x \leq  |\mu_t |^2/|\nu_t |^2$. It is noted that (\ref{eq:MC_pdf}) is similar to (\ref{eq:SC_pdf}) for $\alpha=|\mu_t |^2$, $\beta=|\nu_t |^2$, and $A=1$. Hence, with the aid of (\ref{eq:SC_mgf}), the instantaneous SINR MGF of multi-carrier systems experiencing TX IQI only is given by
\begin{equation}
\label{eq:MC_mgf}
\mathcal{M}_{\gamma_{\rm IQI}}\left(s\right)= e^{\frac{|\mu_t |^2}{|\nu_t |^2}s+\frac{1}{|\nu_t |^2\overline{\gamma}}}\Gamma\left(1,\frac{1}{\overline{\gamma}|\nu_t |^2} ;\frac{s  |\mu_t |^2 }{|\nu_t |^4\overline{\gamma}}\right).
\end{equation}

\subsubsection{RX Impaired by IQI}\label{SSS:SM_MC_RX_IQI}
Assuming that the TX RF front-end is ideal, while the RX is impaired by IQI, the baseband equivalent received signal is represented as
\begin{equation}
r_{\text{IQI}} = \mu_r h\left(k\right) s\left(k\right) + \nu_r h^{*}\left(-k\right) s^{*}\left(-k\right) + \mu_r n\left(k\right) + \nu_r n^{*}\left(-k\right).
\label{Eq:r_MC_RX_IQI}
\end{equation}

Likewise, the instantaneous SINR per symbol at the input of the RX is expressed as
\begin{equation}
\gamma_{\text{\rm IQI}} =\frac{|\mu_r|^{2}}{ |\nu_r|^{2}\frac{\gamma_{\text{id}}\left(-k \right )}{\gamma_{\text{id}}\left(k \right )}+\frac{|\mu_r|^{2}+|\nu_r|^{2}}{\gamma_{\text{id}}\left(k \right )}}. 
\label{Eq:gamma_MC_RX}
\end{equation}
\noindent Hence, substituting $\mu_t=1$ and $\nu_t =0$ in (\ref{eq:MC_JOINT_CDF}) one obtains 
\begin{equation}
\label{eq:MC_RX_CDF}
F_{\gamma_{\rm IQI}}\left(x\right)=1-\frac{|\mu_r|^{2}}{|\mu_r|^{2}+x|\nu_r|^{2}}e^{-\frac{x}{\overline{\gamma}}\left(1+\frac{|\nu_r|^{2}}{|\mu_r|^{2}}\right)},\qquad  0\leq x \leq\infty
\end{equation}
\noindent which with the aid of (\ref{eq:MC_JOINT_PDF}) and after some algebraic manipulations yields the respective SINR PDF, namely 
\begin{equation}
\label{eq:MC_RX_PDF}
f_{\gamma_{\rm IQI}}\left(x \right )=\frac{\exp\left(-\frac{x}{\overline{\gamma}} \left(\frac{|\nu_r|^{2}}{|\mu_r|^{2}} +1\right )\right )}{\overline{\gamma}\left(\frac{x |\nu_r|^{2}}{\left|\mu_r\right|^{2}}+1 \right )}\left(1+\frac{|\nu_r|^{2}}{|\mu_r|^{2}} +\frac{\overline{\gamma}|\nu_r|^{2}}{|\mu_r|^{2}\left(\frac{x|\nu_r|^{2}}{|\mu_r|^{2}}+1 \right )}\right)
\end{equation}
\noindent which is valid for $0\leq x \leq \infty$.

Finally, from (\ref{eq:def_mgf}) and (\ref{eq:MC_RX_PDF}), the corresponding MGF is obtained as 
\begin{equation}
\label{eq:MC_MX_MGF_INT}
\mathcal{M}_{\gamma_{\rm IQI}}\left(s\right)=\frac{1+\frac{|\mu_r|^2}{|\nu_r|^2}}{\overline{\gamma}}\int_{0}^{\infty}\frac{e^{-x\left(\frac{1}{\overline{\gamma}}+\frac{|\nu_r|^2}{\overline{\gamma}|\mu_r|^2}-s\right )}}{x+\frac{|\mu_r|^2}{|\nu_r|^2}}dx+\int_{0}^{\infty}\frac{e^{-x\left(\frac{1}{\overline{\gamma}}+\frac{|\nu_r|^2}{\overline{\gamma}|\mu_r|^2}-s\right )}}{\left(x+\frac{|\mu_r|^2}{|\nu_r|^2}\right )^2} {\rm d}x
\end{equation}
\noindent which with the aid of \cite[eq. (3.352)]{B:Gra_Ryz_Book} and \cite[eq. (3.353)]{B:Gra_Ryz_Book}, eq. (\ref{eq:MC_MX_MGF_INT}) can be expressed by the following closed-form representation  
\begin{equation}
\label{eq:MC_RX_MGF}
\mathcal{M}_{\gamma_{\rm IQI}}\left(s\right)=1-s\frac{|\mu_r|^2}{|\nu_r|^2}e^{\frac{1}{\overline{\gamma}}+\frac{|\mu_r|^2}{\overline{\gamma}|\nu_r|^2}-\frac{s|\mu_r|^2}{|\nu_r|^2}}\textup{Ei}\left(- \frac{1}{\overline{\gamma}}-\frac{|\mu_r|^2}{\overline{\gamma}|\nu_r|^2}+\frac{s|\mu_r|^2}{|\nu_r|^2}\right )  
\end{equation}
\noindent where $\textup{Ei} \left(z\right)=-\int_{-z}^{\infty}{e^{-t}}/{t}  \mathrm{d}t$ denotes the exponential integral function \cite{B:Gra_Ryz_Book}.

The different MGF expressions derived are summarized in Table \ref{table:MGFs}, where $\Lambda=|\mu_{r}|^{2}+|\nu_{r}|^{2}$. It is noted that with the aid of the derived MGFs, the SER of various $M$-ary modulation schemes under different IQI effects as well as multi-channel reception schemes can be readily determined.

\begin{table}[]
\centering
\caption{SINR MGFs}
\label{table:MGFs}
{\renewcommand{\arraystretch}{2}
\begin{tabular}{|l|ll|cl|}
\hline
\multicolumn{1}{|c|}{} & \multicolumn{2}{c|}{Single-carrier systems} & \multicolumn{2}{c|}{Multi-carrier systems} \\ \hline
\multicolumn{1}{|c|}{TX IQI} & \multicolumn{4}{c|}{$\mathcal{M}_{\gamma_{\rm IQI}}\left(s\right)= e^{\frac{|\mu_t |^2}{|\nu_t |^2}s+\frac{1}{|\nu_t |^2\overline{\gamma}}}\Gamma\left(1,\frac{1}{\overline{\gamma}|\nu_t |^2} ;\frac{s,|\mu_t |^2 }{|\nu_t |^4\overline{\gamma}}\right )$} \\ \hline
\multicolumn{1}{|c|}{\multirow{2}{*}{RX IQI}} &\multicolumn{2}{l|}{\multirow{2}{*}{$\mathcal{M}_{\gamma_{\rm IQI}}\left(s\right)=e^{\frac{|\mu_r|^2 s}{|\nu_r|^2}+\frac{\Lambda}{|\nu_r |^2\overline{\gamma}}}\Gamma\left(1,\frac{\Lambda}{\overline{\gamma}|\nu_r |^2} ;\frac{s|\mu_r|^2 \Lambda}{|\nu_r |^4\overline{\gamma}}\right )$}}  & \multicolumn{1}{r}{$\mathcal{M}_{\gamma_{\rm IQI}}\left(s\right)=$} & $1-s\frac{|\mu_r|^2}{|\nu_r|^2}e^{\frac{1}{\overline{\gamma}}+\frac{|\mu_r|^2}{\overline{\gamma}|\nu_r|^2}-\frac{s|\mu_r|^2}{|\nu_r|^2}}$ \\
\multicolumn{1}{|c|}{} & \multicolumn{2}{r|}{ } & \multicolumn{2}{c|}{$\times \textup{Ei}\left(- \frac{1}{\overline{\gamma}}-\frac{|\mu_r|^2}{\overline{\gamma}|\nu_r|^2}+\frac{s|\mu_r|^2}{|\nu_r|^2}\right )$} \\ \hline
\multicolumn{1}{|c|}{} & \multicolumn{1}{r}{} &  & \multicolumn{1}{r}{$\mathcal{M}_{\gamma_{\rm IQI}}\left(s\right)=$} & $C+\frac{|\xi_{12}|^2}{s\left(|\xi_{11}|^2-|\xi_{22}|^2 \right )}e^{s\frac{|\xi_{11}|^2}{|\xi_{12}|^2}+\frac{\Lambda}{|\xi_{12}|^2 \overline{\gamma}}} $ \\
 & \multicolumn{2}{l|}{} & \multicolumn{2}{r|}{$ \times \gamma\left(2,s\frac{|\xi_{11}|^2}{|\xi_{12}|^2};s\frac{|\xi_{11}|^2 \Lambda}{|\xi_{12}|^4\overline{\gamma}} \right), $} \\
 &  &  & \multicolumn{2}{c|}{for $|\xi_{12}|^2=|\xi_{21}|^2$} \\ \cline{4-5} 
Joint IQI & $\mathcal{M}_{\gamma_{\rm IQI}}\left(s\right)=$ & $e^{\frac{|\xi_{11} |^2+|\xi_{22} |^2}{|\xi_{12} |^2+|\xi_{21} |^2}s+\frac{|\mu_r |^2+|\nu_r |^2}{\left(|\xi_{12} |^2+|\xi_{21} |^2\right)\overline{\gamma}}}$ & \multicolumn{1}{l}{$\mathcal{M}_{\gamma_{\rm IQI}}\left(s\right)=$} & $C+\sum_{k=0}^{\infty }\frac{\left(-s\right)^k\, d^k e^{s\frac{|\xi_{11}|^2}{|\xi_{12}|^2}} }{\left( |\xi_{12}|^2-|\xi_{21}|^2\right )^{k+1}|\xi_{12}|^{2k-2}}$ \\
\multicolumn{1}{|c|}{} & \multicolumn{2}{r|}{$\times \Gamma\left(1,\frac{\Lambda}{\overline{\gamma}\left(|\xi_{12} |^2+|\xi_{21} |^2\right)} ;\frac{s\left(|\xi_{11} |^2+|\xi_{22} |^2\right) \Lambda }{\left(|\xi_{12} |^2+|\xi_{21} |^2\right)^2\overline{\gamma}}\right )$} & \multicolumn{2}{r|}{$\times e^{\frac{\Lambda}{|\xi_{12}|^2 \overline{\gamma}}} \gamma\left(1-k,s\frac{|\xi_{11}|^2}{|\xi_{12}|^2};s\frac{|\xi_{11}|^2\Lambda}{|\xi_{12}|^4\overline{\gamma}} \right), $} \\
 & \multicolumn{2}{l|}{} & \multicolumn{2}{c|}{for $\left|\frac{{|\xi_{11}|^2|\xi_{21}|^2-|\xi_{22}|^2|\xi_{12}|^2}}{|\xi_{12}|^2-|\xi_{21}|^2}\right|< |\xi_{11}|^2$} \\ \cline{4-5} 
 &  &  & \multicolumn{1}{l}{$\mathcal{M}_{\gamma_{\rm IQI}}\left(s\right)=$} & $C+ \sum_{k=0}^{\infty } \frac{|\xi_{12}|^{2k+4}e^{s\frac{|\xi_{11}|^2}{|\xi_{12}|^2}}}{\left(|\xi_{21}|^2 -|\xi_{12}|^2\right)^{-k }d^{k+1}s^{k+1}}$ \\
 &  &  & \multicolumn{2}{r|}{$\times e^{\frac{\Lambda}{|\xi_{12}|^2 \overline{\gamma}}} \gamma\left(k+2,s\frac{|\xi_{11}|^2}{|\xi_{12}|^2};s\frac{|\xi_{11}|^2\Lambda}{|\xi_{12}|^4\overline{\gamma}}\right ),$} \\
 &  &  & \multicolumn{2}{c|}{for $\left|\frac{{|\xi_{11}|^2|\xi_{21}|^2-|\xi_{22}|^2|\xi_{12}|^2}}{|\xi_{12}|^2-|\xi_{21}|^2}\right|>|\xi_{11}|^2$} \\ \hline
\end{tabular}}
\end{table}

\section {Symbol Error Rate Analysis}
\label{sec:PA}
This section capitalizes on the derived MGF representation and evaluates the SER performance of both single-carrier and multi-carrier systems employing different coherent and non-coherent \textit{M}-ary modulation schemes in the presence of IQI and multipath fading.

\subsection {Coherent $M$-PSK Symbol Error Rate Analysis}
\label{ss:TX_PSK_BER}

For coherently detected \textit{M}-PSK, the SER under AWGN is given by \cite[eq. (8.22)]{simon2005digital}
\begin{equation}
\label{eq:PSK}
P_{s,{\rm PSK}}=\frac{1}{\pi}\int_{0}^{\frac{\left(M-1\right)\pi}{M}}{\exp\left(-\gamma\frac{g_{{\rm PSK}}}{\sin^2\left(\theta\right)} \right ) {\rm d}\theta}
\end{equation}
\noindent where $\gamma=E_s/N_0$ and $g_{{\rm PSK}}=\sin^2\left(\frac{\pi}{M} \right )$. Under slow fading conditions, the average SER is obtained by averaging (\ref{eq:PSK}) over the considered channel's SINR PDF, namely
\begin{equation}
\label{eq:PSK2}
P_{s,{\rm PSK}}=\frac{1}{\pi}{\int_{0}^{\infty}\int_{0}^{\frac{\left(M-1\right)\pi}{M}}{\exp\left(-x\frac{g_{{\rm PSK}}}{\sin^2\left(\theta\right)} \right )f_\gamma\left(x \right ) {\rm d}\theta}{\rm d}x}
\end{equation}
\noindent which is equivalent to
\begin{equation}
\label{eq:PSK_mgf}
P_{s,{\rm PSK}}=\frac{1}{\pi}\int_{0}^{\frac{\left(M-1\right)\pi}{M}}\mathcal{M}_{\gamma_{\rm IQI}}\left(-x\frac{g_{{\rm PSK}}}{\sin^2\left(\theta\right)}\right) {\rm d}\theta.  
\end{equation}
\noindent Therefore, by assuming PSK modulation, the average SER in the presence of IQI is obtained by substituting the derived MGF expressions into (\ref{eq:PSK_mgf}), which for single-carrier systems is given by
\begin{equation}
\label{eq:SER_PSK}
P_{s,{\rm PSK}}=\frac{1}{\pi}\int_{0}^{\frac{\left(M-1\right)\pi}{M}} e^{-\frac{g_{{\rm PSK}}\alpha}{\sin^2\left(\theta\right)\beta}+\frac{A}{\beta\overline{\gamma}}} \Gamma\left(1,\frac{A}{\overline{\gamma}\beta} ,-\frac{  g_{{\rm PSK}}\alpha A}{\sin^2\left(\theta\right)\beta^2\overline{\gamma}},1\right ) {\rm d}\theta
\end{equation}

\subsection {Differential $M$-PSK Symbol Error Rate Analysis}
\label{ss:TX_Diff_BER}
Considering differential detection of \textit{M}-PSK under AWGN, the exact SER is given by \cite[eq. (8.90)]{simon2005digital}, namely
\begin{equation}
\label{eq:DPSK}
P_{s,{\rm DPSK}}=\frac{1}{ \pi}\int_{0}^{\frac{\left(M-1 \right )\pi}{M}}\exp\left(-\gamma\frac{g_{{\rm PSK}}}{1+\rho\cos\left(\theta \right )}\right)   {\rm d} \theta 
\end{equation}
\noindent where $\rho=\sqrt{1-g_{{\rm PSK}}}$. Based on this and assuming Rayleigh fading conditions, and TX and/or RX IQI, the above expression can be expressed as 
\begin{equation}
\label{eq:dpsk_mgf}
P_{s,{\rm DPSK}}=\frac{1}{\pi}\int_{0}^{\frac{\left(M-1\right)\pi}{M}}\mathcal{M}_{\gamma_{\rm IQI}}\left(-x \frac{g_{{\rm PSK}}}{1+\rho\cos\left(\theta \right )}\right) {\rm d}\theta.
\end{equation}
\noindent The average symbol error rate for \textit{M}-DPSK over Rayleigh fading channels in the presence of IQI is obtained by substituting the derived MGF expressions in (\ref{eq:dpsk_mgf}), which for multi-carrier systems with TX IQI only is given by
\begin{equation}
\label{eq:SER_DPSK}
P_{s,{\rm DPSK}}=\frac{1}{ \pi} \int_{0}^{\frac{\left(M-1 \right )\pi}{M}}e^{-\frac{|\mu_t |^2 g_{{\rm PSK}}}{|\nu_t |^2\left(1+\rho\cos\left(\theta \right )\right)}+\frac{1}{|\nu_t |^2\overline{\gamma}}}  \Gamma\left(1,\frac{1}{\overline{\gamma}|\nu_t |^2} ,\frac{-g_{{\rm PSK}} |\mu_t |^2}{\left(1+\rho\cos\left(\theta \right )\right)|\nu_t |^4\overline{\gamma}},1\right )  {\rm d}\theta. 
 \end{equation}

\subsection {Noncoherent $M$-FSK Symbol Error Rate Analysis}
\label{ss:TX_FSK_BER}
Assuming noncoherent detection of orthogonal signals, corresponding to a minimum frequency spacing $\Delta f= 1/T_s$, the SER of \textit{M}-FSK under AWGN is given by \cite[eq. (8.66)]{simon2005digital}, namely
\begin{equation}
\label{eq:FSK}
P_{s,{\rm FSK}}=\sum_{k=1}^{M-1}\left(-1 \right )^{k+1}\binom{M-1}{k}\frac{\exp\left(-\gamma\frac{k}{k+1} \right )}{k+1}
\end{equation}
\noindent which under fading conditions is expressed as follows
\begin{equation}
\label{eq:FSK_mgf}
P_{s,{\rm FSK}}=\sum_{k=1}^{M-1}\frac{\left(-1 \right )^{k+1}}{k+1}\binom{M-1}{k}\mathcal{M}_{\gamma_{\rm IQI}}\left(-x\frac{k}{k+1} \right ).
\end{equation}
\noindent Therefore, substituting the derived MGF expressions in (\ref{eq:FSK_mgf}) yields the average SER in the presence of IQI, which for the case of multi-carrier systems with RX IQI only is given by 
\begin{equation}
P_{s,{\rm FSK}}=\sum_{k=1}^{M-1}\frac{\left(-1 \right )^{k+1}}{k+1}\binom{M-1}{k}\Bigg[ 1 +\frac{k|\mu_r|^2\textup{Ei}\left(- \frac{|\nu_r|^2+|\mu_r|^2}{\overline{\gamma}|\nu_r|^2}-\frac{k|\mu_r|^2}{\left(k+1\right )|\nu_r|^2}\right )}{e^{-\frac{|\nu_r|^2+|\mu_r|^2}{\overline{\gamma}|\nu_r|^2}-\frac{k|\mu_r|^2}{\left(k+1\right )|\nu_r|^2}}\left(k+1\right )|\nu_r|^2}\Bigg ].  
\end{equation}

To the best of the authors' knowledge, the derived analytic expressions have not been previously reported in the open technical literature. 

\section {Asymptotic Analysis}
\label{S:Asymptotic}

In this section, we analyze the performance of both single-carrier and multi-carrier systems in the asymptotic regime by deriving SER upper bounds. Moreover, since IQI results in interference from either the signal's conjugate or the signal at the image subcarrier, increasing the transmit SNR also increases the interference. Hence, we study the asymptotic behaviors of the derived bounds which provides useful insights into the system behavior.

\subsection {Single-Carrier Systems}
\label{subsec:asym_SC}
We first provide simple upper bounds to the SER of single-carrier-systems for $M$-PSK and $M$-DPSK modulation with IQI at the TX and/or RX.

\subsubsection{M-ary PSK}

It is recalled that the SER of single-carrier systems is given in (\ref{eq:SER_PSK}). It is evident that by setting $\theta= \pi / 2$, the SER is upper bounded by 
\begin{equation}
\label{eq:SC_bound}
P_{s,{\rm PSK}}\leq \tilde{M} e^{-\frac{g_{{\rm PSK}}\alpha}{\beta}+\frac{A}{\beta\overline{\gamma}}} \Gamma\left(1,\frac{A}{\overline{\gamma}\beta} ;-\frac{  g_{{\rm PSK}}\alpha A}{\beta^2\overline{\gamma}}\right )
\end{equation}
\noindent where $\tilde{M}={\left(M-1\right)}/{M}$, which for high SNR levels simplifies to
\begin{equation}
\label{eq:asym_PSK}
P_{s,{\rm PSK}}\leq \tilde{M} e^{-\frac{g_{{\rm PSK}}\alpha}{\beta}}. 
\end{equation}

This upper bound provides insights into the asymptotic behavior of the considered system. For instance, assuming TX or RX IQI only, $\frac{\alpha}{\beta}=\rm{IRR}_{t/r}$; hence, as $\rm{IRR}_{t/r}$ approaches $\infty$, $P_{s,PSK}\rightarrow 0$. On the contrary, as $\rm{IRR}_{t/r}$ approaches $1$, $P_{s,PSK}\rightarrow\frac{M-1}{M}e^{-g_{PSK}}$ which is directly proportional to $M$. Hence, a higher modulation order implies a higher error floor. Moreover, it is evident that the asymptotic behavior of the SER depends on both the modulation index and the IQI parameters.

\subsubsection{M-ary DPSK}

Assuming noncoherent $M$-ary DPSK, the SER is obtained by substituting (\ref{eq:SC_mgf}) in (\ref{eq:dpsk_mgf}). Hence, by setting $\theta=0$, the SER can be upper bounded as follows 
\begin{equation}
\label{eq:SC_bound_DPSK}
P_{s,{\rm DPSK}}\leq \tilde{M} e^{-\frac{\alpha}{\beta}\frac{g_{{\rm PSK}}}{1+\rho}+\frac{A}{\beta\overline{\gamma}}}\Gamma\left(1,\frac{A}{\overline{\gamma}\beta} ;\frac{-g_{{\rm PSK}} \alpha A}{\left(1+\rho \right)\beta^2\overline{\gamma}}\right )   
\end{equation}

\noindent which for high SNR values, since $\Gamma\left(1,0,0\right )=1$, simplifies to
\begin{equation}
\label{eq:asym_DPSK}
P_{s,{\rm DPSK}}\leq \tilde{M} e^{-\frac{\alpha}{\beta}\frac{g_{\rm PSK}}{1+\rho}}.
\end{equation}
\noindent We observe that the exponential function argument in (\ref{eq:asym_DPSK}) is similar to the argument in (\ref{eq:asym_PSK}) but divided by $1+\rho>1$. Hence, from the derived upper bound, we can conclude that for a fixed $M$, the SER of DPSK is asymptotically greater than the SER of PSK.  

\subsection {Multi-Carrier Systems}

In this subsection, upper bounds and asymptotic expressions are derived for the SER of the considered modulation schemes for multi-carrier systems with joint TX and/or RX IQI. 

\subsubsection{M-ary PSK}

Assuming coherent $M$-ary PSK, the SER of multi-carrier systems with TX IQI only, RX IQI only and joint TX/RX IQI is obtained by substituting (\ref{eq:MC_mgf}), (\ref{eq:MC_RX_MGF}) and (\ref{eq:MGF_MC_joint_eq})$-$(\ref{eq:MGF_MC_joint_gt}) in (\ref{eq:PSK_mgf}), respectively.

\begin{itemize}
\item \textit{TX IQI and ideal RX: }
Based on the above and setting $\theta = \pi / 2$, one obtains
\begin{equation}
\label{eq:MC_TX_bound1}
P_{s,\rm{PSK}}\leq \tilde{M} e^{-\frac{g_{\rm{PSK}}|\mu_t|^{2}}{|\nu_t|^{2}}+\frac{1}{|\nu_t|^{2}\overline{\gamma}}} \Gamma\left(1,\frac{A}{\overline{\gamma}|\nu_t|^{2}} ;-\frac{  g_{\rm{PSK}}|\mu_t|^{2} }{|\nu_t|^{4}\overline{\gamma}}\right )
\end{equation}
\noindent which for high SNR values reduces to the following simple closed-form upper-bound
\begin{equation}
\label{eq:MC_TX_asym1}
P_{s,\rm{PSK}}\leq  \tilde{M} e^{-\frac{g_{\rm{PSK}}|\mu_t|^{2}}{|\nu_t|^{2}}}.
\end{equation}

It is noticed that (\ref{eq:MC_TX_bound1}) and (\ref{eq:MC_TX_asym1}) are similar to (\ref{eq:SC_bound}) and (\ref{eq:asym_PSK}) when  $\alpha=|\mu_t|^{2}$ and $\beta=|\nu_t|^{2}$. Importantly, this implies that under TX IQI only, single-carrier and multi-carrier systems exhibit similar behaviors.  

\item \textit{RX IQI and ideal TX: }
For RX IQI only, the SER is upper bounded by
\begin{equation}
P_{s,\rm{PSK}}\leq\tilde{M}\left(1+\frac{g_{\rm{PSK}}|\mu_r|^2}{|\nu_r|^2}e^{\frac{1}{\overline{\gamma}}+\frac{|\mu_r|^2}{\overline{\gamma}|\nu_r|^2}+\frac{g_{\rm{PSK}}|\mu_r|^2}{|\nu_r|^2}}\textup{Ei}\left(- \frac{1}{\overline{\gamma}}-\frac{|\mu_r|^2}{\overline{\gamma}|\nu_r|^2}-\frac{g_{\rm{PSK}}|\mu_r|^2}{|\nu_r|^2}\right ) \right).
\end{equation}
\noindent It is evident that for asymptotic SNR values, the above inequality simplifies to
\begin{equation}
P_{s,\rm{PSK}}\leq\tilde{M}\left(1+\frac{g_{\rm{PSK}}|\mu_r|^2}{|\nu_r|^2}e^{\frac{g_{\rm{PSK}}|\mu_r|^2}{|\nu_r|^2}}\textup{Ei}\left(-\frac{g_{\rm{PSK}}|\mu_r|^2}{|\nu_r|^2}\right ) \right).
\end{equation}

Also, as $\rm{IRR}_{r}=|\mu_r|^2 / |\nu_r|^2$ approaches $\infty$, the exponential integral function can be approximated by $\textup{Ei}\left(-z\right)\approx-\frac{e^{-z}}{z}\left(1-\frac{1}{z}+\frac{2!}{z^2}-\dots\right)$ \cite{abramowitz1966handbook} and hence $P_{s, \rm PSK}\rightarrow 0$. Likewise, as $\rm{IRR}_{r}$ approaches unity, one obtains 
\begin{equation}
 P_{s,\rm PSK}\rightarrow P_{s,\rm PSK}\leq\tilde{M}\left(1+g_{\rm PSK}e^{g_{\rm PSK}}\textup{Ei}\left(-g_{\rm PSK}\right ) \right).
\end{equation}
\noindent Moreover, it is noted that $\forall x\geq 0$ and $y=1+xe^x\textup{Ei}\left(x \right )$, we have $x \propto 1/y$. As a result, it follows that $P_{s,\rm PSK} \propto 1/\rm{IRR}_r$ and $P_{s,\rm PSK} \propto M$. 

\item \textit{Joint TX/RX IQI: }
Finally, for joint TX/RX IQI with $|\xi_{12}|^2=|\xi_{21}|^2$ i.e. ${\rm IRR}_t={\rm IRR}_r$ the corresponding SER is upper bounded by
\begin{equation}
P_{s,\rm{PSK}}\leq\tilde{M}\Bigg(C  -\frac{|\xi_{12}|^2e^{-g_{\rm{PSK}}\frac{|\xi_{11}|^2}{|\xi_{12}|^2}+\frac{|\mu_{r}|^{2}+|\nu_{r}|^{2}}{|\xi_{12}|^2 \overline{\gamma}}}}{g_{\rm{PSK}}\left(|\xi_{11}|^2-|\xi_{22}|^2 \right )} \gamma\left(2,-g_{\rm{PSK}}\frac{|\xi_{11}|^2}{|\xi_{12}|^2};-g_{\rm{PSK}}\frac{|\xi_{11}|^2\left( |\mu_{r}|^{2}+|\nu_{r}|^{2}\right )}{|\xi_{12}|^4\overline{\gamma}} \right)  \Bigg )
\end {equation}
\noindent which for asymptotic SNR values simplifies to
\begin{equation}
P_{s,\rm{PSK}}\leq\tilde{M}\left(C-\frac{|\xi_{12}|^2e^{-g_{\rm{PSK}}\frac{|\xi_{11}|^2}{|\xi_{12}|^2}}}{g_{\rm{PSK}}\left(|\xi_{11}|^2-|\xi_{22}|^2 \right )}\gamma\left(2,-g_{\rm{PSK}}\frac{|\xi_{11}|^2}{|\xi_{12}|^2}\right)  \right )
\end{equation}
\noindent where $\gamma\left(a,x\right)=\int_{0}^{x}t^{a-1}e^{-t}{\rm d}t$ is the lower incomplete gamma function \cite{B:Tables}. It is noted that $\forall x\geq 0$ and $y=1-\frac{e^{-x}}{x}\gamma\left(2,-x \right )$, we have $x \propto 1/y$ and thus, $P_{s, \rm PSK}\propto M$. Moreover, since $|\xi_{11}|^2/|\xi_{12}|^2=\rm{IRR}_t=\rm{IRR}_r$, it follows that $P_{s,\rm PSK} \propto 1/\rm{IRR}_{t/r}$.
\end{itemize}

\subsubsection{M-ary DPSK}

Assuming noncoherent $M$-ary DPSK, the SER of multi-carrier systems with TX IQI only, RX IQI only and joint TX/RX IQI is obtained by substituting (\ref{eq:MC_mgf}), (\ref{eq:MC_RX_MGF}) and (\ref{eq:MGF_MC_joint_eq})$-$(\ref{eq:MGF_MC_joint_gt}) in (\ref{eq:dpsk_mgf}), respectively.  
\begin{itemize}
\item \textit{TX IQI and ideal RX:} 
Based on the above and by setting $\theta = 0$, it follows that 
\begin{equation}
\label{eq:MC_TX_bound}
P_{s,{\rm DPSK}}\leq \tilde{M}e^{-\frac{g_{{\rm PSK}}|\mu_t|^{2}}{\left(1+\rho\right)|\nu_t|^{2}}+\frac{1}{|\nu_t|^{2}\overline{\gamma}}} \Gamma\left(1,\frac{A}{\overline{\gamma}|\nu_t|^{2}} ;-\frac{  g_{{\rm PSK}}|\mu_t|^{2} }{\left(1+\rho\right)|\nu_t|^{4}\overline{\gamma}}\right )\end{equation}
\noindent which for high SNR values reduces to the following simple bound
\begin{equation}
\label{eq:MC_TX_asym}
P_{s,{\rm DPSK}}\leq  \tilde{M} e^{-\frac{g_{{\rm PSK}}|\mu_t|^{2}}{\left(1+\rho\right)|\nu_t|^{2}}}.
\end{equation}
\noindent It is noted that (\ref{eq:MC_TX_bound}) and (\ref{eq:MC_TX_asym}) are similar to (\ref{eq:SC_bound_DPSK}) and (\ref{eq:asym_DPSK}) when $\alpha=|\mu_t|^{2}$ and $\beta=|\nu_t|^{2}$. Therefore, in the case of TX IQI only, $M$-DPSK based single-carrier and multi-carrier systems show similar behaviors.

\item \textit{RX IQI and ideal TX:}
For RX IQI only, the SER is upper bounded by
\begin{equation}
P_{s,{\rm DPSK}}\leq\tilde{M}\Bigg(1 +\frac{g_{{\rm PSK}}|\mu_r|^2}{\left(1+\rho\right)|\nu_r|^2}e^{\frac{1}{\overline{\gamma}}+\frac{|\mu_r|^2}{\overline{\gamma}|\nu_r|^2}+\frac{g_{{\rm PSK}}|\mu_r|^2}{\left(1+\rho\right)|\nu_r|^2}}   \textup{Ei}\left(- \frac{1}{\overline{\gamma}}-\frac{|\mu_r|^2}{\overline{\gamma}|\nu_r|^2}-\frac{g_{P{\rm PSK}}|\mu_r|^2}{\left(1+\rho\right)|\nu_r|^2}\right ) \Bigg)
\end{equation}
\noindent which for asymptotic SNR values simplifies to
\begin{equation}
P_{s,{\rm DPSK}}\leq\tilde{M}\left(1+\frac{g_{{\rm PSK}}|\mu_r|^2}{\left(1+\rho\right)|\nu_r|^2}e^{\frac{g_{{\rm PSK}}|\mu_r|^2}{\left(1+\rho\right)|\nu_r|^2}}\textup{Ei}\left(-\frac{g_{{\rm PSK}}|\mu_r|^2}{\left(1+\rho\right)|\nu_r|^2}\right ) \right).
\end{equation}
\noindent Notably, since $\frac{g_{PSK}|\mu_r|^2}{|\nu_r|^2}>\frac{g_{PSK}|\mu_r|^2}{\left(1+\rho\right)|\nu_r|^2}$, we can conclude that for a fixed $M$, the SER of DPSK is asymptotically greater than the SER of PSK. 

\item \textit{Joint TX/RX IQI:}
Finally, for the case of joint TX/RX IQI with $|\xi_{12}|^2=|\xi_{21}|^2$, the SER is upper bounded by
\begin{equation}
\begin{split}
P_{s,{\rm DPSK}}\leq\tilde{M}\Bigg(C & -\frac{\left(1+\rho \right )|\xi_{12}|^2e^{-\frac{g_{{\rm PSK}}}{1+\rho}\frac{|\xi_{11}|^2}{|\xi_{12}|^2}+\frac{|\mu_{r}|^{2}+|\nu_{r}|^{2}}{|\xi_{12}|^2 \overline{\gamma}}}}{g_{{\rm PSK}}\left(|\xi_{11}|^2-|\xi_{22}|^2 \right )} \\ & \times \gamma\left(2,-\frac{g_{{\rm PSK}}}{\left(  1+\rho\right)}\frac{|\xi_{11}|^2}{|\xi_{12}|^2};-\frac{g_{{\rm PSK}}}{\left( 1+\rho\right )}\frac{|\xi_{11}|^2\left( |\mu_{r}|^{2}+|\nu_{r}|^{2}\right )}{|\xi_{12}|^4\overline{\gamma}} \right)  \Bigg )
\end{split}
\end {equation}
\noindent which for asymptotic SNR values simplifies to
\begin{equation}
P_{s,{\rm DPSK}}\leq\tilde{M}\left(C-\frac{\left(1+\rho \right )|\xi_{12}|^2e^{-\frac{g_{{\rm PSK}}|\xi_{11}|^2}{\left(1+\rho \right )|\xi_{12}|^2}}}{g_{{\rm PSK}}\left(|\xi_{11}|^2-|\xi_{22}|^2 \right )}\gamma\left(2,-\frac{g_{{\rm PSK}}|\xi_{11}|^2}{\left(1+\rho \right )|\xi_{12}|^2}\right)  \right ).  
\end{equation}
It is also shown, for this case, that for a fixed $M$, the SER of DPSK is asymptotically greater than the SER of its PSK counterpart. 
\end{itemize}
\begin{figure}[h]
\centering
\includegraphics[width=12.0cm, height = 10.0cm]{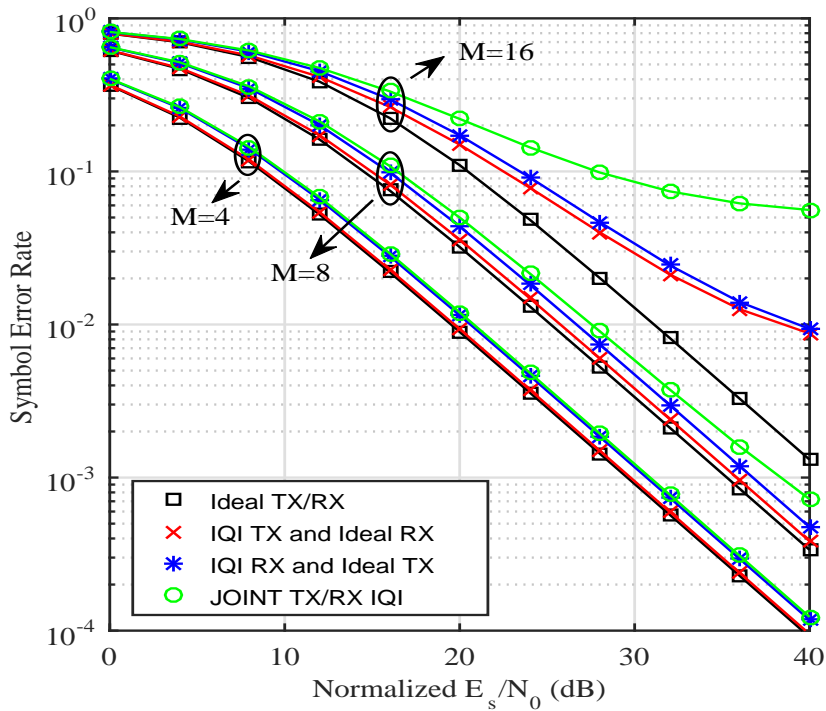}
\caption{Single-carrier system average SER as a function of the normalized $E_s/N_0$ for \textit{M}-PSK when ${\rm IRR}_t={\rm IRR}_r=20 {\rm dB}$ and $\phi=3\degree$.}
\label{fig:SC_PSK}
\end{figure}
\section {Numerical and Simulation Results}
\label{S:Sim}

In this section, we quantify the effects of IQI on the performance of single-carrier and multi-carrier based \textit{M}-PSK, \textit{M}-DPSK and \textit{M}-FSK systems over flat Rayleigh fading channels in terms of the corresponding average SER. For a fair comparison, we assume that the transmit power level is always fixed. This implies that the transmitted signal is normalized by $|\mu_{t}|^{2}+|\nu_{t}|^{2}$ for TX IQI, by $|\mu_{r}|^{2}+|\nu_{r}|^{2}$ for RX IQI and by $\left(|\mu_{t}|^{2}+|\nu_{t}|^{2}\right)\left(|\mu_{r}|^{2}+|\nu_{r}|^{2}\right)$ for joint TX/RX IQI. 

\begin{figure}[h!]
\centering
\includegraphics[width=12.0cm, height = 10.0cm]{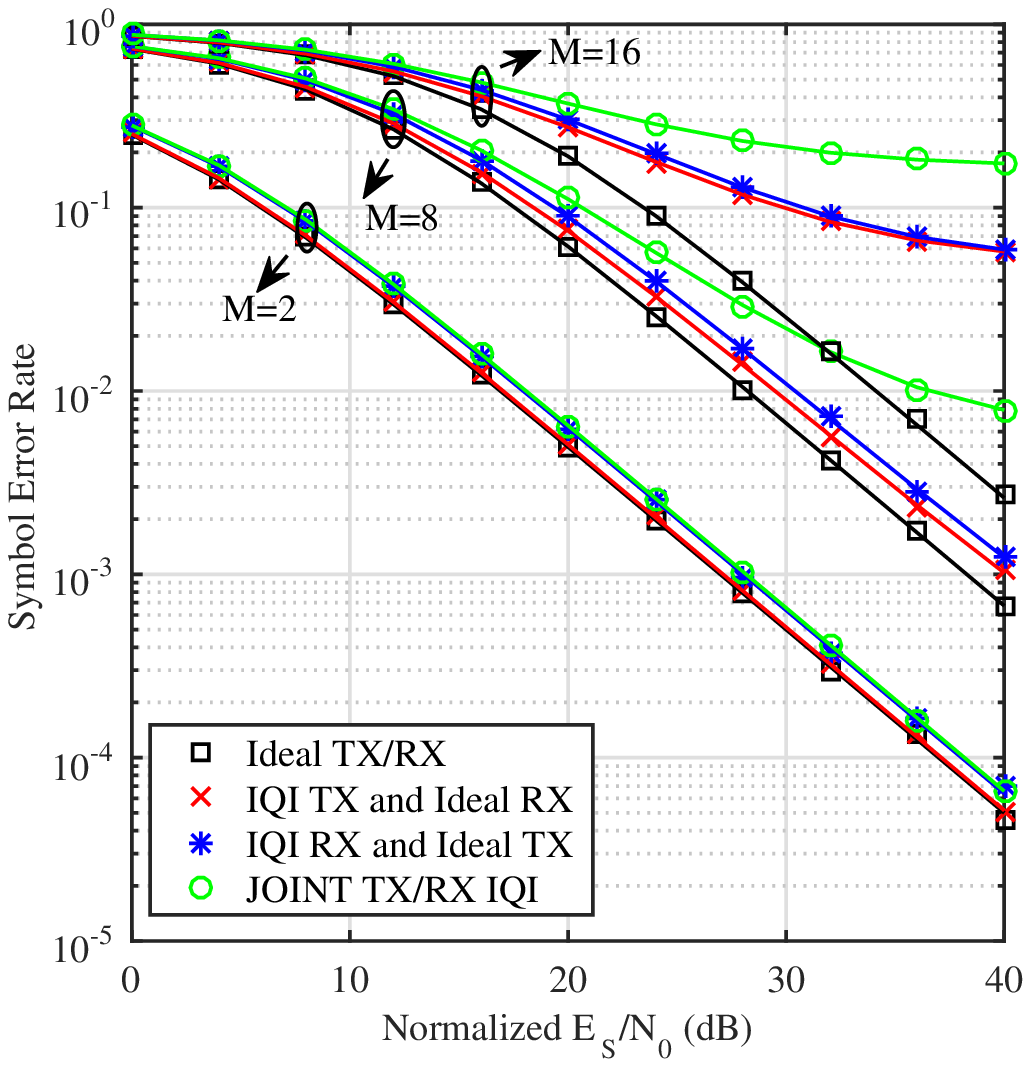}
\caption{Single-carrier system average SER as a function of the normalized $E_s/N_0$ for \textit{M}-DPSK when ${\rm IRR}_t={\rm IRR}_r=20 {\rm dB}$ and $\phi=3\degree$.  }
\label{fig:SC_DPSK}
\end{figure}
\begin{figure}[h!]
\centering
\includegraphics[width=12.0cm, height = 10.0cm]{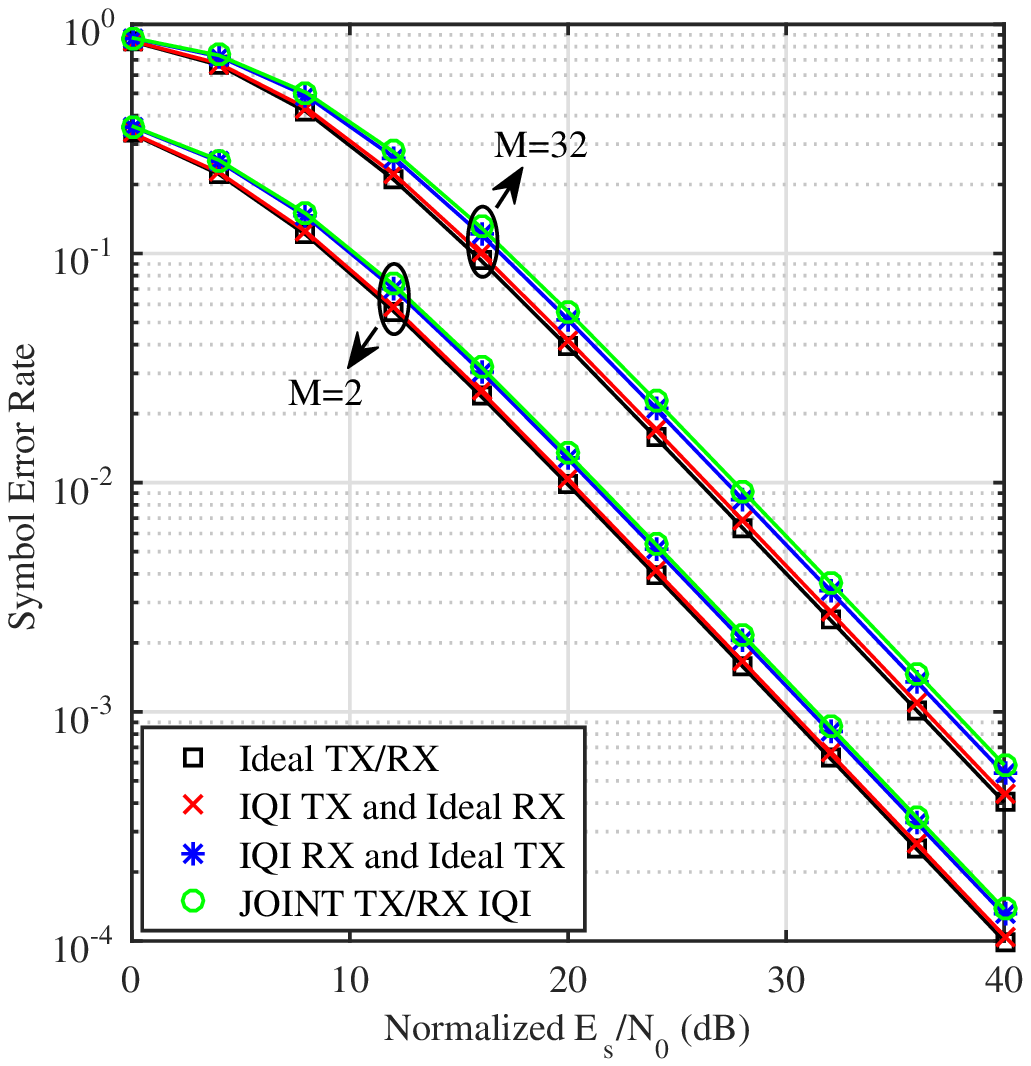}
\caption{Single-carrier system average SER as a function of the normalized $E_s/N_0$ for \textit{M}-FSK when ${\rm IRR}_t={\rm IRR}_r=20 {\rm dB}$ and $\phi=3\degree$.}
\label{fig:SC_FSK}
\end{figure}
\begin{figure}[h]
\centering
\includegraphics[width=12.0cm, height = 10.0cm]{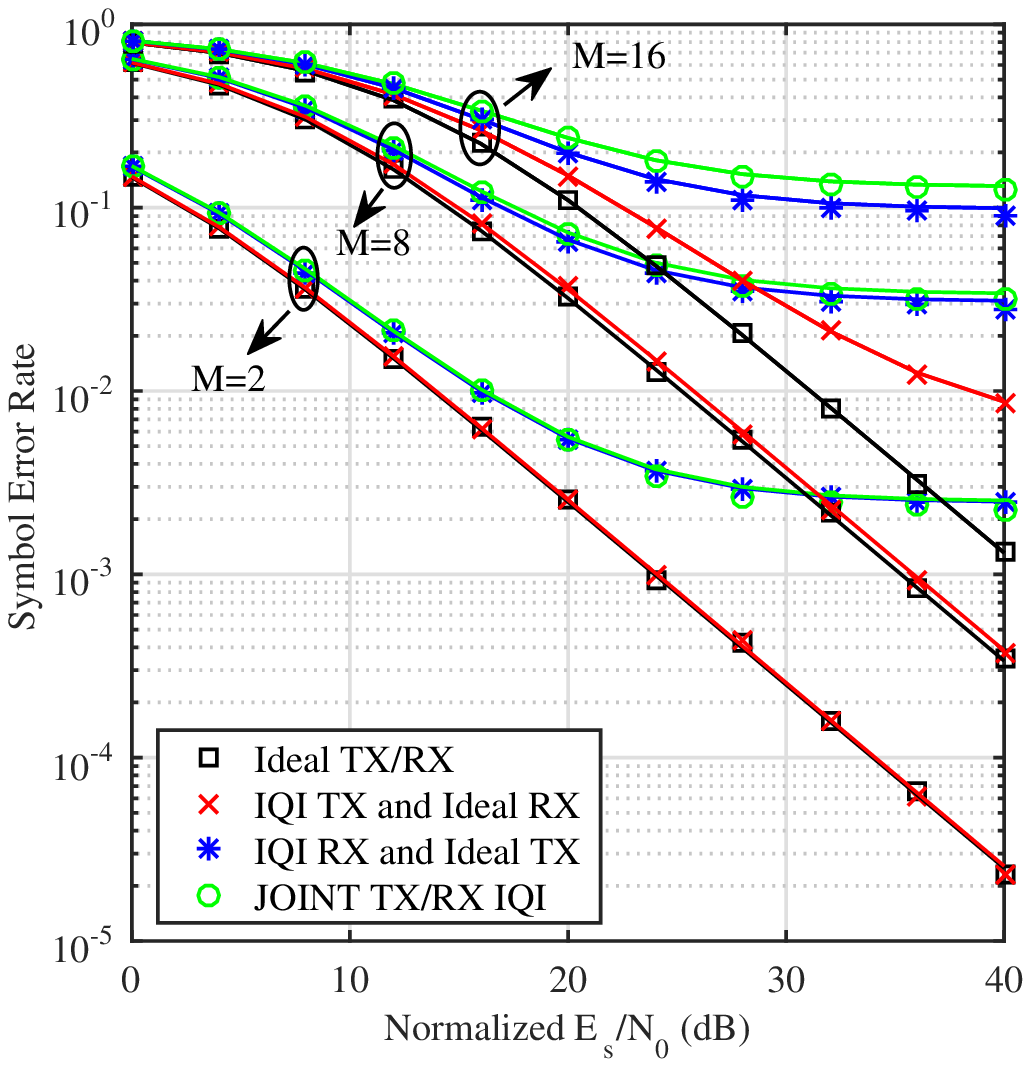}
\caption{Multi-carrier system average SER as a function of the normalized $E_s/N_0$ for \textit{M}-PSK when ${\rm IRR}_t={\rm IRR}_r=20 {\rm dB}$ and $\phi=3\degree$.}
\label{fig:MC_PSK}
\end{figure}
To this end, Figs. \ref{fig:SC_PSK}$-$\ref{fig:SC_FSK} and Figs. \ref{fig:MC_PSK}$-$\ref{fig:MC_FSK} illustrate the SER for \textit{M}-PSK, \textit{M}-DPSK and \textit{M}-FSK for single-carrier systems and multi-carrier systems, respectively. Assuming ${\rm IRR}_t={\rm IRR}_r=20{\rm dB}$, all possible combinations of ideal/impaired TX/RX are presented. It is noted that the numerical results are shown with continuous lines, whereas markers are used to illustrate the respective computer simulation results. For both single-carrier and multi-carrier\footnote{This demonstrates that our assumption of uncorrelated carrier and its image does not affect the accuracy of the SER analysis in multi-carrier systems.} systems, it is noticed that the derived expressions characterize accurately the simulated SER performance for all the considered modulation schemes in the presence of IQI. Specifically, it is first observed that RX IQI has more detrimental impact on the system performance than TX IQI. This result is expected since RX IQI affects both the signal and the noise while TX IQI impairs the information signal only. Second, it is noticed that IQI exhibits different effects on the different modulation schemes considered. For example, it can be drawn from Fig.\ref{fig:SC_FSK} that the effects of IQI on single-carrier FSK are rather limited irrespective of the modulation order. This can be explained by the fact that the tone spacing in FSK is constant regardless of the modulation order. Hence, unlike PSK and DPSK, the IQI effects on FSK do not depend on the modulation order for both single-carrier and multi-carrier systems. However, the cost of increasing $M$ for FSK is an increased transmission bandwidth. This is not the case for the other two modulation schemes where the angle separation depends on the modulation order. For instance, the effects of IQI can be considered acceptable i.e., no error floor observed for the considered SNR range, only for $M\leq8$ and $M\leq 4$ for the cases of PSK and DPSK modulations, respectively. In fact, for single-carrier systems, when $M=16$, an error floor is observed at around $30{\rm dB}$ when PSK modulation suffers from joint TX/RX IQI, while for DPSK this error floor appears at around $28{\rm dB}$ for all the considered impairment scenarios. It is also worth mentioning that for the joint TX/RX IQI case, this error floor is around $6 \times 10^{-2}$ for PSK versus $2 \times 10^{-1}$ for DPSK. Hence for a fixed $M$, the error floor is higher for DPSK than PSK, which confirms our observations in Section \ref{S:Asymptotic}.
\begin{figure}[h!]
\centering
\includegraphics[width=12.0cm, height = 10.0cm]{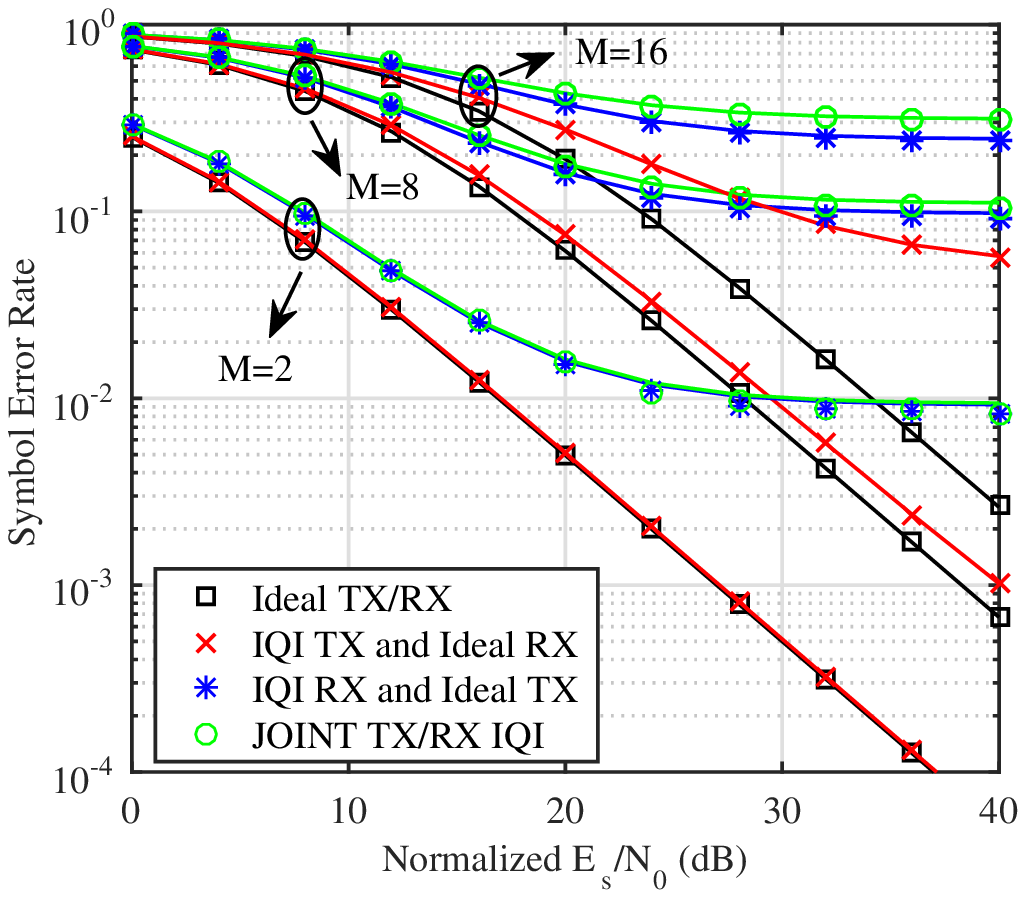}
\caption{Multi-carrier system average SER as a function of the normalized $E_s/N_0$ for \textit{M}-DPSK when ${\rm IRR}_t={\rm IRR}_r=20 {\rm dB}$ and $\phi=3\degree$.}
\label{fig:MC_DPSK}
\end{figure}
\begin{figure}[h!]
\centering
\includegraphics[width=12.0cm, height = 10.0cm]{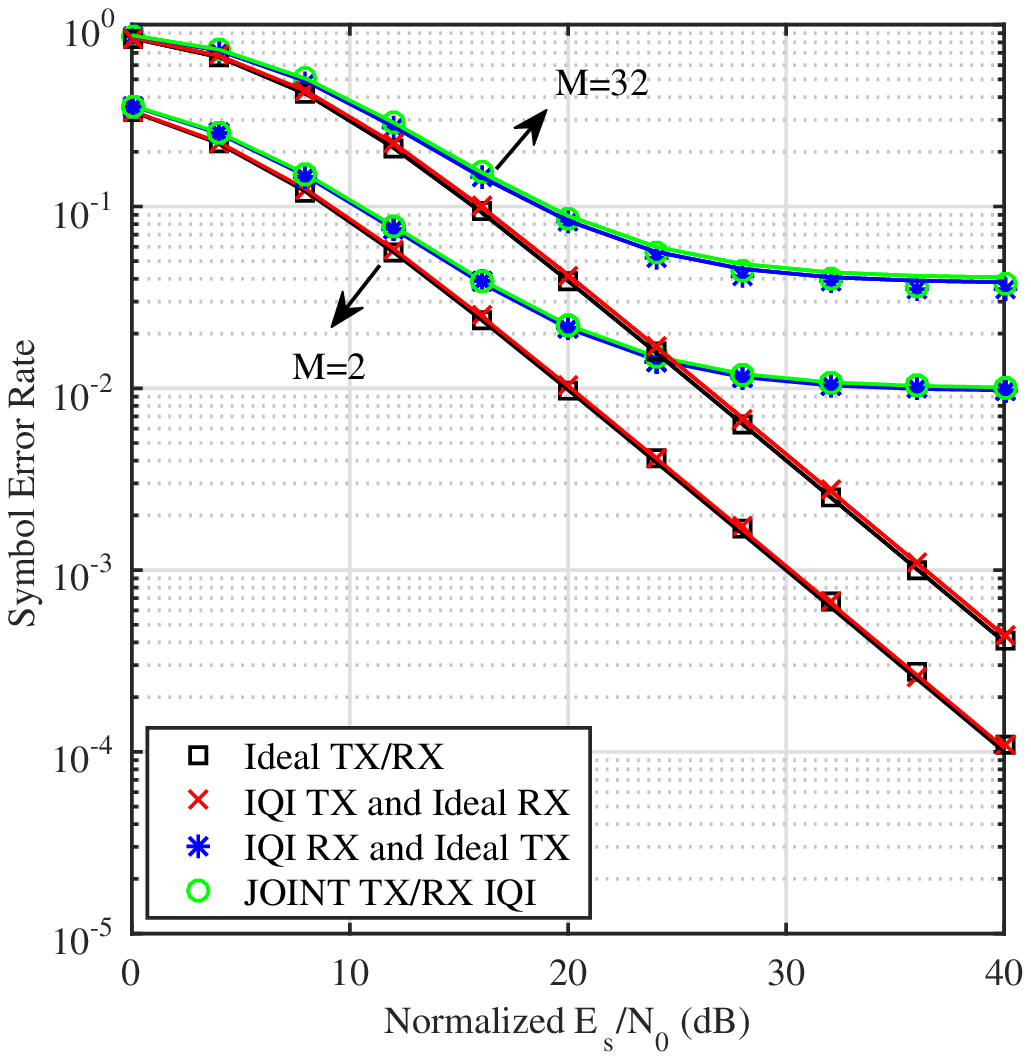}
\caption{Multi-carrier system average SER as a function of the normalized $E_s/N_0$ for \textit{M}-FSK when ${\rm IRR}_t={\rm IRR}_r=20 {\rm dB}$ and $\phi=3\degree$.}
\label{fig:MC_FSK}
\end{figure}
\begin{figure}[h!]
\centering
\includegraphics[width=12.0cm, height = 10.0cm]{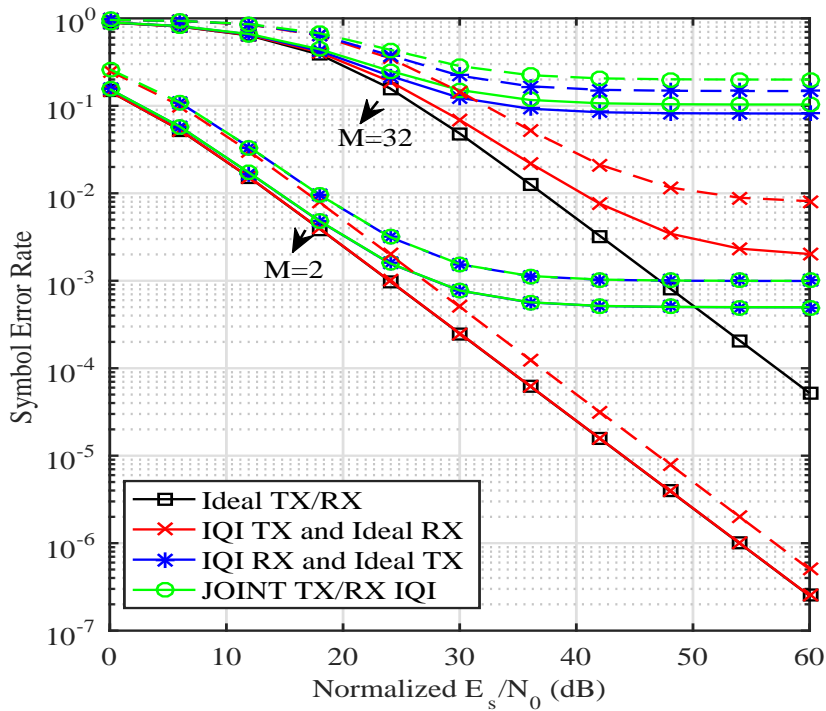}
\caption{Multi-carrier system average SER (solid line) and derived upper bound (dashed line) as a function of the normalized $E_s/N_0$ for \textit{M}-PSK when ${\rm IRR}_t={\rm IRR}_r=27 {\rm dB}$ and $\phi=1\degree$.}
\label{fig:MC_bound}
\end{figure}
\begin{figure}[h]
\centering
\includegraphics[width=12.0cm, height = 10.0cm]{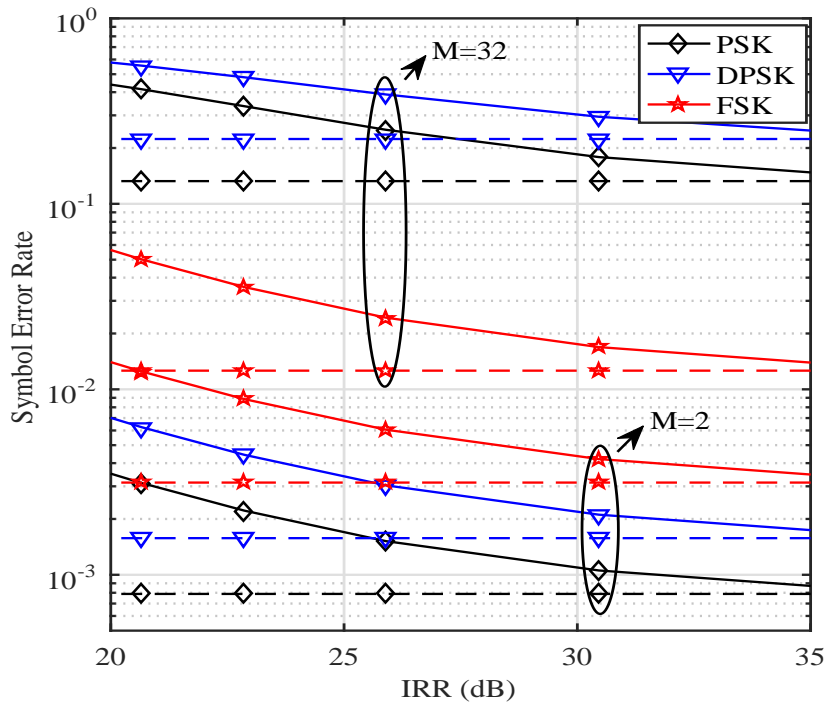}
\caption{Multi-carrier system average SER as a function of the $\rm IRR$ for \textit{M}-PSK, \textit{M}-DPSK and \textit{M}-FSK, with RX IQI only, when $E_s/N_0=25 {\rm dB}$ and $\phi=2\degree$.}
\label{fig:IRR}
\end{figure}
\begin{figure}[h]
\centering
\includegraphics[width=12.0cm, height = 10.0cm]{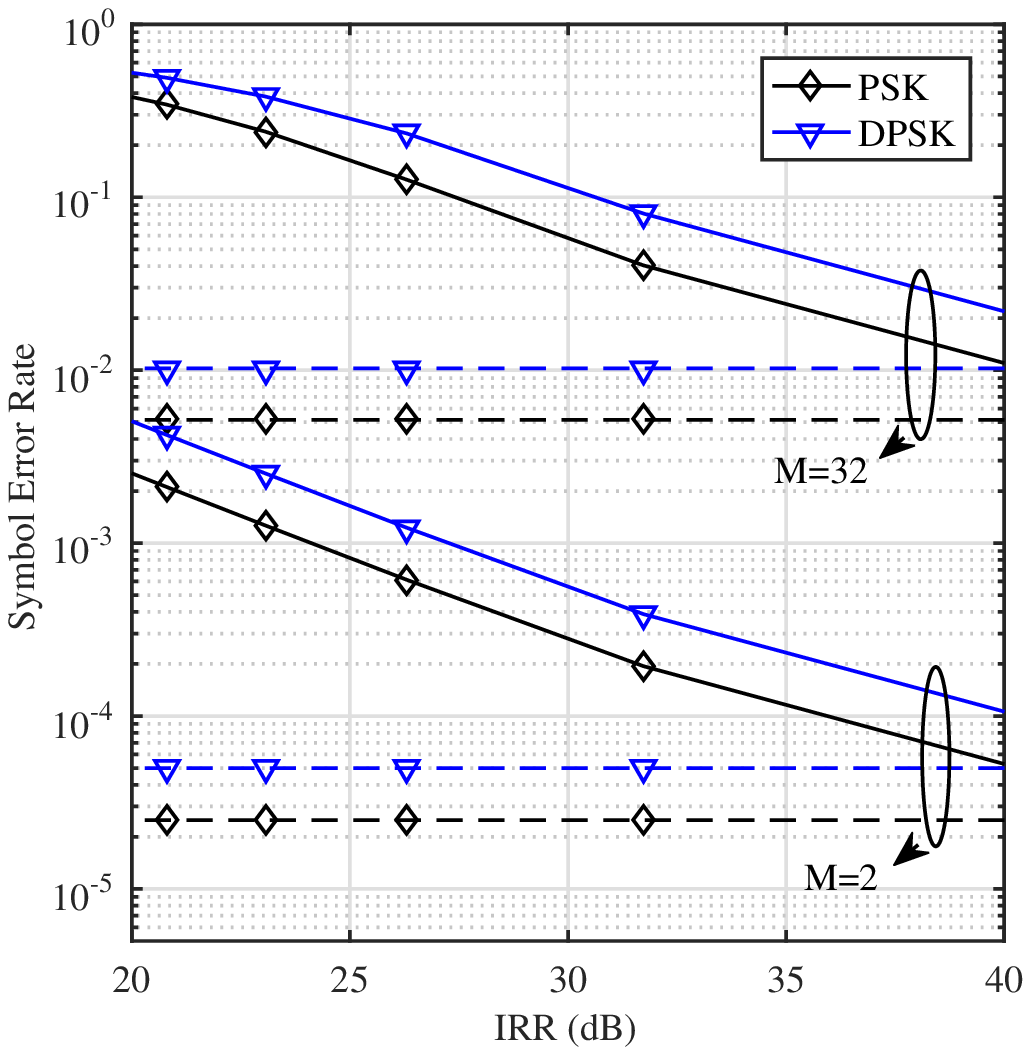}
\caption{Multi-carrier system average SER as a function of the $\rm IRR$ for \textit{M}-PSK and \textit{M}-DPSK, with RX IQI only, when $E_s/N_0=40 {\rm dB}$ and $\phi=1\degree$.}
\label{fig:IRR_40}
\end{figure}

Even though the effects of IQI on the different modulation schemes follow the same trend in multi-carrier systems as in single-carrier systems, it is observed that IQI affects the former more severely than the latter. This is because IQI in multi-carrier systems causes interference from the image subcarrier, which can have higher SNR than the desired signal, while single-carrier IQI causes interference from the signal's own complex conjugate. An interesting example is the case of \textit{M}-FSK constellation, where in single-carrier systems the effects of IQI are negligible, while in multi-carrier systems an error floor is observed in Fig. \ref{fig:MC_FSK}, regardless of the modulation order, for the RX IQI only as well as for joint TX/RX IQI cases. In the same context, the error floor for binary FSK appears at around $24 {\rm dB}$. This error floor is observed for PSK and DPSK as well with binary PSK being the most robust to IQI among the considered modulations, since the error floor appears at around $28{\rm dB}$. It is also noted that unlike single-carrier systems where in some cases IQI could be neglected, for the considered $\rm{IRR}$ values, in multi-carrier systems the effects of IQI at the RX should be compensated in order to achieve a reliable communication link, even in the case of the relatively simple binary modulation schemes.

For multi-carrier systems, Fig. \ref{fig:MC_bound} compares the derived upper bound to the exact SER of $M$-PSK when ${\rm IRR}_t={\rm IRR}_r=27{\rm dB}$. In this scenario, we consider the cases of TX IQI only, RX IQI only and joint TX/RX IQI for $M=2$ and $M=32$. The solid lines correspond to the exact SER while the dashed lines represent the respective bound. It is noticed that although the bound is not particularly tight, it exhibits the same behavior as the exact SER curves and hence can provide useful insights into the system performance. 

Finally, Fig. \ref{fig:IRR} and Fig. \ref{fig:IRR_40} demonstrate the effects of the $\rm IRR$ on the SER of the different considered modulation schemes for multi-carrier systems, when ${\rm SNR}=25{\rm dB}$ and ${\rm SNR}=40{\rm dB}$, respectively. It is assumed that both TX and RX are IQI-impaired and that ${\rm IRR}_t={\rm IRR}_r$. The phase imbalance assumed is $1\degree$ in Fig. \ref{fig:IRR} and $2\degree$ in Fig. \ref{fig:IRR_40}. It is also noted that the continuous lines and dashed lines correspond to the IQI-impaired and ideal cases, respectively. For moderate SNR values, one can see that IQI affects the different modulations schemes in a different manner. For instance, joint TX/ RX IQI exhibits a constant loss in the SER performance of $M$-FSK regardless of the modulation order, which is not the case when considering phase modulation. Moreover, it is noticed that for lower SNR values, the effects of IQI vanish when the $\rm IRR$ is increased; however, for higher SNR values and given that IQI effects dominate noise effects at high SNR, there is a noticeable performance degradation even when considering high $\rm IRR$ values.

\section {Conclusion}
\label{S:Conclusions}
We developed a general framework for the SER performance analysis of different \textit{M}-ary coherent and non-coherent modulation schemes over Rayleigh fading channels in the presence of IQI at the RF front end. The realistic cases of TX IQI only, RX IQI only and joint TX/RX IQI were considered and the corresponding average SER expression of the underlying schemes was derived both in exact and in asymptotic form providing useful insights into the overall system behavior. The derived analytic results were corroborated with respective results from computer simulations. It was shown that the performance degradation caused by IQI depends on the considered modulation scheme with \textit{M}-DPSK being the most sensitive modulation scheme to IQI. Moreover, for coherent and noncoherent phase modulation, increasing the modulation order increases the impact of IQI on the system, while for frequency modulation the performance degradation observed is constant regardless of the modulation order and single carrier frequency modulation is the most robust scheme to IQI effects.

\appendices
\section{Derivation of MGF for Multi-Carrier Systems Impaired by Joint TX/RX IQI}
\label{Appendix:MGF_MC_JOINT}

From (\ref{eq:def_mgf}) and (\ref{eq:MC_JOINT_PDF}), taking $u=e^{s \gamma}$ and $dv=f_{\gamma}\left(\gamma \right )$ and integrating by parts, one obtains
\begin{equation}
\label{eq:MC_JOINT_MGF_INT}
\mathcal{M}_{\gamma_{\rm IQI}}\left(s\right)=C+s\int_{0}^{\frac{|\xi_{11}|^2}{|\xi_{12}|^2}}\frac{ |\xi_{11}|^{2}-\gamma|\xi_{12}|^{2}}{|\xi_{11}|^2-|\xi_{22}|^2+x\left(|\xi_{21}|^2 -|\xi_{12}|^2\right )}e^{s x }e^{-\frac{x}{\overline{\gamma}}\left( \frac{|\mu_{R}|^{2} +\left|\nu_{R}\right|^{2}}{|\xi_{11}|^{2}-x|\xi_{12}|^{2}}\right )} {\rm d}x 
\end{equation}
\noindent where $C$ is given in (\ref{eq:const}). For the case of $|\xi_{12}|^2=|\xi_{21}|^2$ and setting $z=|\xi_{11}|^{2}-x|\xi_{12}|^{2}$, equation (\ref{eq:MC_JOINT_MGF_INT}) simplifies to 
\begin{equation}
\mathcal{M_{\gamma_{\rm IQI}}}\left(s\right)=C+\frac{s}{|\xi_{12}|^2\left(|\xi_{11}|^2-|\xi_{22}|^2 \right )}e^{s\frac{|\xi_{11}|^2}{|\xi_{12}|^2}+\frac{|\mu_{r}|^{2}+|\nu_{r}|^{2}}{|\xi_{12}|^2 \overline{\gamma}}}\int_{0}^{|\xi_{11}|^{2}}z e^{-z \frac{s}{|\xi_{12}|^2}-\frac{|\xi_{11}|^2\left( |\mu_{r}|^{2}+|\nu_{r}|^{2}\right )}{\overline{\gamma}|\xi_{12}|^2 z}}{\rm d}z
\end{equation}
\noindent which, considering the change of variable $y=\frac{z s }{|\xi_{12}|^2}$, is equivalent to (\ref{eq:MGF_MC_joint_eq}). On the contrary, for $|\xi_{12}|^2 \neq |\xi_{21}|^2$ and setting $z=|\xi_{11}|^{2}-xa|\xi_{12}|^{2}$, equation (\ref{eq:MC_JOINT_MGF_INT}) becomes
\begin{equation}
\label{eq:MGF_temp}
\mathcal{M}_{\gamma_{\rm IQI}}\left(s\right)=\frac{|\xi_{11}|^2}{|\xi_{11}|^2-|\xi_{22}|^2}+\frac{s \, e^{\frac{|\mu_{R}|^{2}+|\nu_{R}|^{2}}{|\xi_{12}|^2 \overline{\gamma}}+s \frac{|\xi_{11}|^2}{|\xi_{12}|^2}}}{|\xi_{12}|^2-|\xi_{21}|^2}\int_{0}^{|\xi_{11}|^2}\frac{z \, e^{-\frac{|\xi_{11}|^2\left( |\mu_{R}|^{2}+|\nu_{R}|^{2}\right )}{|\xi_{12}|^2 \overline{\gamma}z}-s \frac{z}{|\xi_{12}|^2}}}{\frac{d}{|\xi_{12}|^2-|\xi_{21}|^2}+z} {\rm d}z
\end{equation}
\noindent where $d$ is given in (\ref{eq:d}). For the case of $\left|\frac{{|\xi_{11}|^2|\xi_{21}|^2-|\xi_{22}|^2|\xi_{12}|^2}}{|\xi_{12}|^2-|\xi_{21}|^2}\right|<|\xi_{11}|^2$, we expand the involved binomial which yields 
\begin{equation}
\mathcal{M}_{\gamma_{\rm IQI}}\left(s\right)=\frac{|\xi_{11}|^2}{|\xi_{11}|^2-|\xi_{22}|^2}+\sum_{k=0}^{\infty }\frac{\left(-1\right)^ks \, d^k e^{\frac{|\mu_{R}|^{2}+|\nu_{R}|^{2}}{|\xi_{12}|^2 \overline{\gamma}}+s \frac{|\xi_{11}|^2}{|\xi_{12}|^2}}}{\left( |\xi_{12}|^2-|\xi_{21}|^2\right )^{k+1}}\int_{0}^{|\xi_{11}|^2}{ z^{-k} \, e^{-\frac{|\xi_{11}|^2\left( |\mu_{R}|^{2}+|\nu_{R}|^{2}\right )}{|\xi_{12}|^2 \overline{\gamma}z}-s \frac{z}{|\xi_{12}|^2}}}{\rm d}z
\end{equation}
\noindent By setting once more $y= x s / |\xi_{12}|^2$, equation (\ref{eq:MGF_MC_joint_lt})) is deduced. Meanwhile for $\left|\frac{{|\xi_{11}|^2|\xi_{21}|^2-|\xi_{22}|^2|\xi_{12}|^2}}{|\xi_{12}|^2-|\xi_{21}|^2}\right| >  |\xi_{11}|^2$, and expanding the binomial in (\ref{eq:MGF_temp}), one obtains the following analytic expression
%
%
\begin{equation}
\begin{split}
\mathcal{M}_{\gamma_{\rm IQI}}\left(s\right)=\frac{|\xi_{11}|^2}{|\xi_{11}|^2-|\xi_{22}|^2}+ & \sum_{k=0}^{\infty }\frac{\left(-1\right)^k s \, e^{\frac{|\mu_{R}|^{2}+|\nu_{R}|^{2}}{|\xi_{12}|^2 \overline{\gamma}}+s \frac{|\xi_{11}|^2}{|\xi_{12}|^2}}\left( |\xi_{12}|^2-|\xi_{21}|^2\right )^{k}}{d^{k+1} } \\  & \times \int_{0}^{|\xi_{11}|^2}{ z^{k+1} \, e^{-\frac{|\xi_{11}|^2\left( |\mu_{R}|^{2}+|\nu_{R}|^{2}\right )}{|\xi_{12}|^2 \overline{\gamma}z}-s \frac{z}{|\xi_{12}|^2}}}{\rm d}z  
\end{split}
\end{equation}
\noindent Finally, equation (\ref{eq:MGF_MC_joint_gt}) is obtained by taking $y=s {z}/{|\xi_{12}|^2}$.

\balance
\bibliographystyle{IEEEtran}
\bibliography{IEEEabrv,References}

\end{document}